\newtheorem{theorem}{Theorem}
\newtheorem{lemma}{Lemma}
\theoremstyle{remark}
\newtheorem{remark}{Remark}
\DeclareMathOperator*{\argmin}{arg\,min}
\DeclareMathOperator*{\argmax}{arg\,max}
\DeclareMathOperator{\Var}{Var}
\begin{document}

\title{Consistent estimation of dynamic and multi-layer\\block models}
\author[1]{Qiuyi Han}
\author[2]{Kevin S.~Xu}
\author[1]{Edoardo M.~Airoldi}
\affil[1]{Department of Statistics, Harvard University, Cambridge, MA, USA
\authorcr \url{qiuyihan@fas.harvard.edu}, \url{airoldi@fas.harvard.edu}}
\affil[2]{Technicolor Research, Los Altos, CA, USA
\authorcr\url{kevinxu@outlook.com}}

\maketitle

\begin{abstract} 
Significant progress has been made recently on theoretical analysis of estimators for the stochastic block model (SBM).
In this paper, we consider the \emph{multi-graph} SBM, which serves as a foundation for many application settings including dynamic and multi-layer networks. 
We explore the asymptotic properties of two estimators for the multi-graph SBM, namely spectral clustering and the maximum-likelihood estimate (MLE), as the number of layers of the multi-graph increases. 
We derive sufficient conditions for \emph{consistency} of both estimators and propose a variational approximation to the MLE that is computationally feasible for large networks. 
We verify the sufficient conditions via simulation and demonstrate that they are practical.
In addition, we apply the model to two real data sets: a dynamic social network and a multi-layer social network with several types of relations.
\end{abstract} 

\section{Introduction}

Modeling relational data arising from networks including social, biological, and information networks has received much attention recently. 
Various probabilistic models for networks have been proposed, including stochastic block models and their mixed-membership variants \citep{airoldi2008mixed,goldenberg2010survey}.
However, in many settings, we not only have a single network, but a collection of networks over a common set of nodes, which is often referred to as a \emph{multi-graph}. 
Multi-graphs arise in several types of settings including dynamic networks with time-evolving edges, such as time-stamped social networks of interactions between people, and multi-layer networks, where edges are measured in multiple ways such as phone calls, text messages, e-mails, face-to-face contacts, etc.

A significant challenge with multi-graphs is to extract common information across the \emph{layers} of the multi-graph in a concise representation, yet be flexible enough to allow differences across layers.
Motivated by the above examples, we consider the \emph{multi-graph stochastic block model} first proposed by \citet{holland1983stochastic}, which divides nodes into classes that define blocks in the multi-graph.
The key assumption is that nodes share the same block structure over the multiple layers, but the class connection probabilities may vary across layers.
We believe this model is a flexible and principled way of analyzing multi-graphs and provides a strong foundation for many applications.
The special case of a single layer, often referred to simply as the stochastic block model (SBM), has been studied extensively in recent years \citep{bickel2009nonparametric,rohe2011spectral,
choi2012stochastic,celisse2012consistency,jin2012fast,
bickel2013asymptotic,amini2013pseudo}.
However, the more general multi-graph case has not been studied as much.

In this paper, we explore the asymptotic properties of several estimators for the multi-graph SBM by letting the number of network layers grow while keeping the number of nodes \emph{fixed}. 
We prove that a spectral clustering estimate of the class memberships is consistent for a special case of the model (Section \ref{sec:spec}).
Next we derive sufficient conditions under which the maximum-likelihood estimate (MLE) of the class memberships is consistent in the general case (Section \ref{sec:mle}).
Finally we propose a variational approximation to the MLE that is computationally tractable and is applicable to many multi-graph settings including dynamic and multi-layer networks (Section \ref{sec:variational}).
We apply the spectral and variational approximation methods to several simulated and real data sets, including both a dynamic social network and a social network with multiple types of relations between people (Section \ref{sec:expri}).

Our main contribution is the consistency analysis for the MLE, which ensures the tractability of the model and paves the way for more sophisticated models and inference techniques. 
To the best of our knowledge, we provide the \emph{first} theoretical results for the multi-graph SBM for a growing number of layers.

\section{Related work}
\label{sec:related_work}
Probabilistic models for networks have been studied for several decades; many commonly used models are discussed in the survey by \citet{goldenberg2010survey}. 
More recent work includes non-parametric network models using graphons \citep{airoldi2013stochastic,wolfe2013nonparametric,gao2014rate}.
Most previous models assume that a single network, rather than a multi-graph, is observed. 

Two settings where multi-graphs arise include dynamic and multi-layer networks.
Dynamic network models typically assume that a sequence of network snapshots is observed at discrete time steps. 
Previous work on dynamic network models has built upon models for a single network augmented with Markovian dynamics. 
\citet{ahmed2009recovering,hanneke2010discrete} built upon exponential random graph models.
\citet{ishiguro2010dynamic,yang2011detecting,ho2011evolving,xu2014dynamic,Xu2015}
built on stochastic block models.
\citet{sarkar2005dynamic,sarkar2007latent,durante2014bayesian} used latent space models. 
\citet{foulds2011dynamic,heaukulani2013dynamic,kim2013nonparametric} used latent feature models.

Multi-layer networks consider multiple types of connections simultaneously. 
For example, Facebook users interact by using ``likes'', comments, messages, and other means. 
Multi-layer networks go by many other names like multi-relational, multi-dimensional, multi-view, and multiplex networks. 
The analysis of multi-layer networks has a long history \citep{holland1983stochastic,
fienberg1985statistical,
szell2010multirelational,
mucha2010community,magnani2011ml,oselio2013multi}.
However there has not been much work on probabilistic modeling of such networks, aside from the multi-view latent space model proposed by \citet{salter2013latent}, which couples the latent spaces of the multiple layers. 

A third related setting involves modeling populations of networks, where each observation consists of a network snapshot drawn from a probability mass function over a network-valued sample space. 
\citet{durante2014nonparametric} proposed a nonparametric Bayesian model for this setting. 
This setting differs from the multi-graph setting that we consider in this paper because the network snapshots (layers) are drawn in an independent and identically distributed (iid) fashion, with no coupling between the snapshots. 

The statistical properties of the inference algorithms in both dynamic and multi-layer network models have not typically been studied. 
%Thus the performance of the model and inference is not guaranteed.
Recently there has been a lot of progress on consistency analysis for single networks.
Maximum-likelihood estimation, its variational approximation, and spectral clustering have all been proven to be consistent under the stochastic block model \citep{bickel2009nonparametric,rohe2011spectral,
choi2012stochastic,celisse2012consistency,
zhao2012consistency,jin2012fast,
bickel2013asymptotic,lei2013consistency,Yang:2014fk} as the number of nodes $N \rightarrow \infty$. 
Intuitively, for each new node added to the graph, we observe $N$ realizations, hence larger $N$ provides more information leading to consistent estimation of the model. 

We extend the ideas used in single networks to multi-graphs.
We note that the asymptotic regime is different in this case.
For a single network, one typically lets $N\rightarrow \infty$, while for multi-graphs, we let $T\rightarrow \infty$ with $N$ \emph{fixed}. 
Intuitively it means we do not need to observe a very large network to get a correct understanding of the structure.
Instead, we can gain the information through multiple samples, which may represent, for example, multiple observations over time or multiple relationships.
In practice, it may be more realistic to allow $N$ to grow along with $T$, particularly for the dynamic network setting. 
Allowing $N$ to grow provides \emph{more} information; thus our analysis with fixed $N$ serves as a conservative analysis for different settings.

\section{Multi-graph stochastic block model}
\label{sec:model}
We present an overview of the \emph{multi-graph stochastic block model} first proposed by \citet{holland1983stochastic}.
A single relation is represented by an adjacency matrix ${G^t=(G_{ij}^t), \,i,j=1,\ldots,N}$.
We focus on symmetric binary relations with no self-edges.
For a multi-graph, we observe an \emph{adjacency array} $\vec{G}= \{G^1,\,G^2,\,\ldots ,\, G^T\}$ sharing the same set of nodes.
Subscripts denote the same node pairs for any $t$, while the superscript $t$ indexes layers of the multi-graph.
A layer may refer to time or type of relation depending on the application.
If $\vec{G}$ is a random adjacency array for $N$ nodes and $T$
relations, then the probability distribution of $\vec{G}$ is called a \emph{stochastic multi-graph}.
Let the edge $G_{ij}^t$ be a Bernoulli random variable with success probability $\Phi_{ij}^t$.
${\Phi^t=(\Phi_{ij}^t) \in [0,1]_{N \times N}}$ is the probability matrix of graph $G^t$.
Let ${\vec{\Phi}=\{\Phi^1,\,\Phi^2\,\ldots\, \Phi^T\}}$ be the \emph{probability array}.
We assume the independence of edges within and across layers conditioned on the probability array.
That is, the adjacency array is generated according to
\[
G^t_{ij}|\vec{\Phi} \;\stackrel{\text{ind}}{\sim}\; \text{Bern}(\Phi_{ij}^t).
\]

The multi-graph stochastic block model is a special case of a stochastic multi-graph.
In the multi-graph SBM, networks are generated in the following manner. 
First each node is assigned to a class with probability
$\pi = \{\pi_1, \ldots,\pi_K\}$ where $\pi_k$ is the probability for a node to be assigned to class $k$.
Then, given that nodes $i$ and $j$ are in classes $k$ and $l$, respectively, an edge between $i$ and $j$ in network layer $t$ is generated with probability $P_{kl}^t$.
In other words, nodes in the same classes in the same layer have the same connection probability governed by ${\vec{P} =\{P^1,\,P^2,\ldots ,P^T\}\in [0,1]_{K\times K}}$, the \emph{class connection probability array}.
Let $c_i\in \{1,\ldots,K\}$ denote the class label of node $i$. 
Then $\Phi_{ij}^t=P^t_{c_ic_j}$.

The nodes have class labels $\vec{c}$ \emph{shared} by all of the layers of the multi-graph, but in each layer the class connection probabilities $P^t_{kl}$ may be different.
As we consider undirected networks, $P^t$ is a symmetric matrix with $K(K+1)/2$ free parameters.
One can see that the (single network) SBM is a special case of the multi-graph SBM with $T=1$.

Though simple, this multi-graph model has not been formally studied in the a posteriori setting where class labels are estimated.
It serves as a basis for many settings including dynamic networks and networks with multiple relations.
More importantly, it can be theoretically analyzed and can provide insight on more complex models.

\section{Consistent estimation for the multi-graph stochastic block model}
\label{sec:est}
\citet{holland1983stochastic} only discussed estimation of the multi-graph SBM with blocks specified a priori.
The sample proportion of each layer $t$ is the maximum-likelihood estimate (MLE) of the class probability matrix $P^t$. 
However, in most applications, the block structure is unknown. 
Hence our main goal is to accurately estimate the class memberships.
We extend several inference techniques used for the single network SBM to the multi-layer case.

It is not immediately straightforward how we can utilize inference techniques designed for the single network SBM.
One may imagine inferring $\vec{c}$ independently from each network and averaging across them, e.g.~by majority voting. 
That is, each node is assigned the class label that occurs most often.
We find in simulations that this ad-hoc method often does not work well.

We propose spectral clustering on the mean graph as a motivating method for a special case of the model.
Then we discuss maximum-likelihood estimation, a natural way to combine the information contained in the different layers, for the general case. 
Maximum-likelihood estimation is intractable for large networks so we also consider a variational approximation to the MLE.

Our main focus is on the consistency properties of these methods. 
We consider a fixed number of nodes $N$ but let the number of graph layers $T\rightarrow \infty$. 
In reality, although we do not have infinite layers, we often encounter situations with a large number of layers, such as dynamic networks over long periods of time.

\subsection{Consistency of spectral clustering}
\label{sec:spec}
Spectral clustering is a popular choice for estimating the block structure of the SBM because it scales to large networks and has shown to be consistent as $N \rightarrow \infty$ \citep{Sussman2012}.
The method is based on singular value decomposition and K-means clustering on the singular vectors.

A natural way to extend spectral clustering from single networks to multi-graphs 
is to apply spectral clustering on the mean graph $\bar{G} = \frac{1}{T}\sum_{t=1}^T G^t$.
This method is intuitively appealing as it matches with the assumption of a single set of class labels shared by all of the layers. 
We show that under some stationarity and ergodicity conditions, it indeed provides a consistent estimate of the class assignments.
Specifically we consider the case where the class connection probabilities $P^t$ vary across layers but have the same mean $M$.
The following theorem shows the consistency of spectral clustering on the mean graph $\bar{G}$ if the mean $M$ is identifiable.

\begin{theorem}
\label{theorem:spectral}
Assume $\vec{P}$ follows a stationary ergodic process such that $E(P_{kl}^t) = \mu_{kl}$ and $Var(P_{kl}^t) = \epsilon_{kl}^2$ for all $t$. 
Assume $M = [\mu_{kl}]$ is identifiable, i.e.~$M$ has no identical rows.
Let $\bar{G} = \frac{1}{T}\sum_{t=1}^T G^t$. 
Spectral clustering of $\bar{G}$ gives accurate labels as $T\rightarrow \infty$. That is, let $U_{N\times K}$ be the first K right singular vectors in the singular value decomposition of $\bar{G}$. K-means clustering on the rows of $U_{N\times K}$ outputs class estimates $\hat{c}_1,...,\hat{c}_N$. Up to permutation, 
$\hat{c} = c, \text{ a.s. as } T\rightarrow \infty$.
\end{theorem}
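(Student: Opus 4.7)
The plan is to reduce the problem to a classical spectral clustering consistency statement for a single matrix whose limit has clean block structure. Concretely, I would show first that $\bar G$ converges entrywise (hence in operator norm, since $N$ is fixed) to the block matrix $\Phi^*$ with $\Phi^*_{ij} = \mu_{c_i c_j}$, and then invoke a Davis--Kahan / Wedin perturbation argument together with the identifiability of $M$ to conclude that the K-means step recovers $\vec c$ exactly for all large $T$.

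For the convergence step, I would work with the joint process $\{(P^t, G^t)\}_{t \ge 1}$. Given $\vec\Phi$, the edges $G_{ij}^t$ are independent Bernoulli with mean $\Phi_{ij}^t = P^t_{c_i c_j}$. The pair process is stationary because $\{P^t\}$ is stationary and $G^t$ is produced from $P^t$ by i.i.d.\ Bernoulli sampling; ergodicity transfers to the joint process because this added randomization is independent across $t$. Applying Birkhoff's ergodic theorem to the function $(P,G)\mapsto G_{ij}$ for each of the finitely many pairs $(i,j)$ yields
\[
\bar G_{ij} \;=\; \frac{1}{T}\sum_{t=1}^T G_{ij}^t \;\longrightarrow\; E\bigl[G_{ij}^t\bigr] \;=\; E\bigl[P^t_{c_i c_j}\bigr] \;=\; \mu_{c_i c_j} \quad \text{a.s.}
\]
Since $N$ is fixed, a finite union bound gives $\|\bar G - \Phi^*\|_{\mathrm{op}} \to 0$ a.s.

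For the spectral step, note that $\Phi^*$ has block structure: if $Z \in \{0,1\}^{N \times K}$ is the membership matrix with $Z_{ik} = \mathbf{1}\{c_i = k\}$, then $\Phi^* = Z M Z^\top$. Because $M$ has no identical rows (and the class sizes are nonzero a.s.\ for $N$ large enough, or by assumption), the column space of $Z M$ has exactly $K$ distinct row patterns, so the matrix $U^*$ of top $K$ right singular vectors of $\Phi^*$ satisfies: rows $i$ and $j$ of $U^*$ coincide iff $c_i = c_j$, and the minimum separation between rows of different classes is a strictly positive constant $\delta(M,\vec c)$. Wedin's $\sin\Theta$ theorem then gives $\|U - U^* O\|_F \le C\|\bar G - \Phi^*\|_{\mathrm{op}}/\sigma_K(\Phi^*)$ for some orthogonal $O$, where $\sigma_K(\Phi^*) > 0$ by identifiability. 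Hence the rows of $U$ are eventually within $\delta/3$ of the corresponding rows of $U^*O$, so a correct K-means solution exists and is unique up to the obvious label permutation.

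The main obstacle is the ergodic-theorem step: one has to be careful that the conditional Bernoulli randomization really does preserve ergodicity of the joint chain, rather than merely stationarity. The cleanest route is to realize the $G^t$'s as deterministic functions of $P^t$ and an i.i.d.\ sequence $U^t$ of auxiliary uniforms independent of $\{P^s\}$, so the joint process is a factor of a product of an ergodic system and a Bernoulli shift and therefore ergodic. A secondary nuisance is that the standard spectral consistency results (e.g.\ Sussman et al.) assume i.i.d.\ entries, so one has to state the perturbation argument in a way that depends only on $\|\bar G - \Phi^*\|_{\mathrm{op}}\to 0$ and the fixed geometry of $\Phi^*$; this is routine once the operator-norm convergence above is in hand.
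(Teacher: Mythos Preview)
Your approach is correct and takes a genuinely different route from the paper in the convergence step. The paper bounds $E\|\bar G - CMC^\top\|_F^2$ via a variance decomposition---conditioning on $\bar\Phi$ and then using $\Var(\bar P_{kl}) = \epsilon_{kl}^2/T$---and applies Markov's inequality. That argument as written yields only convergence in probability, not the almost-sure convergence stated in the theorem, and the identity $\Var(\bar P_{kl}) = \epsilon_{kl}^2/T$ tacitly assumes the $P^t$'s are uncorrelated across $t$, which is stronger than stationary ergodic. Your Birkhoff-based argument sidesteps both issues: the factor-of-a-product construction (ergodic base times an independent Bernoulli shift of auxiliary uniforms) is the right way to justify ergodicity of the joint process, and it delivers almost-sure entrywise convergence directly, which with $N$ fixed immediately gives operator-norm convergence. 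For the spectral step the two arguments are essentially interchangeable---the paper invokes an eigenvector perturbation lemma of Oliveira where you use Wedin---and both inherit the same minor gap: ``$M$ has no identical rows'' does not by itself guarantee $\sigma_K(ZMZ^\top)>0$ (a symmetric $M$ can have distinct rows yet rank below $K$), so an implicit full-rank or eigenvalue-separation assumption on $M$ is needed in either version. With that caveat noted, your sketch is sound and arguably cleaner than the paper's.
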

We provide a sketch of the proof; details can be found in Appendix \ref{sec:ProofSpectral}. 
Since we have independent errors in the probability matrix and also independent errors in the Bernoulli observations, averaging cancels the error so that 
$\bar{G}\rightarrow CMC'$. 
Here $C$ is a rank $K$ matrix incorporating the class assignment vectors. 
Using an inequality from \citet{oliveira2009concentration}, we bound the distance between the singular vectors of $\bar{G}$ and $CMC'$. 
Therefore, spectral clustering on $\bar{G}$ clusters the nodes into $K$ different classes.

\begin{remark}
To determine the number of classes is a difficult model selection problem even for a single network. 
We will not discuss this problem in detail. 
We assume $K$ is fixed and known in this paper.
\end{remark}

\begin{remark}
The diagonal of $G^t$ is always $0$ because no self-edges are allowed; however the diagonal of $CMC'$ is not necessarily $0$. 
This may not cause a problem as $N\rightarrow \infty$. 
But for finite $N$, it may cause error in estimating the singular vectors. 
If this is the case, we may utilize the singular value decomposition that minimizes the off-diagonal mean square error
\[
\argmin_{U,S} \sum_{i < j} (\bar{G}_{ij}-U_iSU_j')^2,
\]
which can be computed by iterative singular value decomposition \citep{scheinerman2010modeling}. 
\end{remark}

The condition in Theorem \ref{theorem:spectral} requiring $\vec{P}$ to be stationary with identifiable mean $M$ is restrictive. 
Spectral clustering on the mean graph is not effective in many cases. 
Consider the case for which
\[
P^t \in \left\{
\begin{pmatrix}
0.7 & 0.3 \\
0.3 & 0.7
\end{pmatrix},
\begin{pmatrix}
0.3 & 0.7 \\
0.7 & 0.3
\end{pmatrix}
\right\}
\]
where both outcomes are equally likely for all $t$.
Then
\[
M = \begin{pmatrix}
0.5 & 0.5 \\
0.5 & 0.5 \end{pmatrix}
\]
is not identifiable.
Spectral clustering on the mean graph fails to correctly estimate the class assignments as $T \rightarrow \infty$.
But there is information contained in this multi-graph. 
We can use maximum likelihood estimation to estimate the class assignments correctly in this case.

\subsection{Consistency of maximum likelihood estimate}
\label{sec:mle}
Now we focus on the general case where we do not place any structure on $\vec{P}$.
A natural way to estimate the class assignment is to use the maximum-likelihood estimate (MLE). 
We show that for a large enough fixed $N$, the MLE will estimate the class memberships correctly as $T \rightarrow \infty$.

First we define some notation. 
For any class assignment $z$, let $n_k(z)=\#\{i:z_i=k\}$ be the number of nodes in class $k$. 
Let $m(z)=\min_k n_k(z)$ denote the minimum number of nodes in any class under labels $z$.
Let 
\[
n_{kl}(z)=\begin{cases} n_kn_l, & \, k \neq l \\
			n_k(n_k-1)/2, & \, k= l. \end{cases}
\]
be the number of pairs of nodes in each block.
We drop the dependency on $z$ whenever it is unambiguous. We also drop the superscript $t$ when we talk about a single layer of the network.

Now define some notation related to the MLE. The complete log-likelihood for parameters $(z,P)$ is 
\[
l(z,P)=\sum_{i<j} \left(G_{ij}\log(P_{z_iz_j})+(1-G_{ij})\log(1-P_{z_iz_j})\right).
\]
Here $P$ is a parameter not to be confused with the true class connection probability matrix. In particular, we are interested in the case $P_{kl}=\bar{P}_{kl}(z)$ where 
\[
\bar{P}_{kl}(z)=\frac{1}{n_{kl}(z)}\sum_{\substack{i:z_i=k \\ j:j\neq i,z_j=l}} P_{c_ic_j}.
\]
Here $\bar{P}$ is the average of the true $P$ under block assignment $z$. To ease notation, let
\begin{equation*}
\sigma(p)=p\log(p)+(1-p)\log(1-p). \label{eqn:sigma}
\end{equation*}
Denote the expectation of log-likelihood of $(z,\bar{P}(z))$ as 
\begin{equation*}
h(z)=E[l(z,\bar{P}(z))]=\sum_{k\leq l} n_{kl}(z) \, \sigma(\bar{P}(z)). \label{eqn:h}
\end{equation*}
Now, as we do not observe the true $P$, the natural step is to estimate it with the empirical mean for any given $z$. So let 
\[
o_{kl}(z)=\sum_{\substack{z_i=k\\z_j=l}} \begin{cases} G_{ij},&\,k\neq l\\
	G_{ij}/2, & \, k= l. \end{cases}
\]
be the observed number of edges in block $(k,l)$. 
Then the profile log-likelihood \citep{bickel2009nonparametric} is defined as
\begin{equation*}
f(z)=\sum_{k\leq l} n_{kl}(z) \, \sigma \!\left(\frac{o_{kl}(z)}{n_{kl}(z)}\right). \label{eqn:f}
\end{equation*}
Let the expectation of $f$ be
\begin{equation*}
g(z)=E(f(z))=\sum_{k\leq l} n_{kl}(z) \,E\!\left[\sigma\left(\frac{o_{kl}(z)}{n_{kl}(z)}\right)\right]. \label{eqn:g}
\end{equation*}
Now we are ready to state the consistency of the MLE for the multi-graph SBM. 
If all elements of $P^t$ are bounded away from $0$ and $1$ and their column differences are at least some distance apart, then when we have a sufficient number of nodes in each block, the true label $c$ uniquely maximizes the sum of profile log-likelihoods over the layers.
\begin{theorem}
\label{theorem:consistency}
Let 
\begin{gather*}
C_0=\inf_{t,k,l} (P^t_{kl},1-P^t_{kl}) \\
\delta=\inf_{t,k,l} \max_m \bigg[\sigma(P^t_{km})+\sigma(P^t_{lm}) 
-2\sigma\left(\frac{P^t_{km}+P^t_{lm}}{2}\right)\bigg].
\end{gather*}
Assuming $C_0>0$ and $\delta>0$,
if $m(c) = \min_k n_k(c)$ is sufficiently large, then
\[
\hat{c} = \argmax_z \sum_t f^t(z)\rightarrow c, \text{ a.s. as } T\rightarrow \infty.
\]
\end{theorem}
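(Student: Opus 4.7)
The plan is to reduce Theorem~\ref{theorem:consistency} to showing that for every $z$ that is not a relabeling of $c$,
\[
\frac{1}{T}\sum_{t=1}^{T}\bigl(f^t(z)-f^t(c)\bigr)\le -\alpha + o(1)\quad\text{a.s.,}
\]
with $\alpha>0$ depending on $N$, $K$, $C_0$, $\delta$, and $m(c)$ but not on $T$ or $z$. Since $N$ is fixed, the set of label vectors is finite, so a union bound upgrades this per-$z$ bound to a uniform statement and the theorem follows. I would obtain the bound via the layer-wise three-term decomposition
\[
f^t(z)-f^t(c) = \bigl[h^t(z)-h^t(c)\bigr] + \bigl\{[g^t(z)-h^t(z)]-[g^t(c)-h^t(c)]\bigr\} + \bigl\{[f^t(z)-g^t(z)]-[f^t(c)-g^t(c)]\bigr\}
\]
and treating the three pieces separately.

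The first piece is the deterministic ``signal.'' Because $\sigma(p)=p\log p+(1-p)\log(1-p)$ is convex on $(0,1)$, applying Jensen blockwise to $\bar P^t_{kl}(z)=\frac{1}{n_{kl}(z)}\sum P^t_{c_ic_j}$ yields $h^t(z)\le h^t(c)$ for every $t$. To extract a quantitative gap I would use the confusion matrix $R_{ab}(z)=\#\{i:c_i=a,\,z_i=b\}$: if $z$ is not a permutation of $c$, some row of $R$ has at least two nonzero entries, so there exist distinct true classes $k,l$ and nodes $i,j$ with $c_i=k$, $c_j=l$, $z_i=z_j$. For any other node $r$, the edges $(i,r)$ and $(j,r)$ enter a common $z$-block and contribute a Jensen deficit at least $\sigma(P^t_{k c_r})+\sigma(P^t_{l c_r})-2\sigma\!\bigl((P^t_{k c_r}+P^t_{l c_r})/2\bigr)$. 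By the definition of $\delta$, for each $t$ there is a target class $m^{*}$ making this quantity at least $\delta$, and since at least $m(c)$ nodes have $c_r=m^{*}$, each mislabeled pair contributes at least $m(c)\,\delta$ in layer $t$. Summing over mislabeled pairs gives $h^t(c)-h^t(z)\ge m(c)\,\delta\cdot\beta(z,c)$ for a positive combinatorial factor $\beta(z,c)$ depending only on the agreement pattern between $z$ and $c$, so the per-layer signal is at least $\alpha_1:=m(c)\,\delta\,\min_{z\not\sim c}\beta(z,c)>0$.

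For the ``bias'' piece, a second-order Taylor expansion of $\sigma$ around $\bar P^t_{kl}(z)$ gives
\[
E\!\left[\sigma\!\left(\tfrac{o^t_{kl}}{n_{kl}}\right)\right]-\sigma(\bar P^t_{kl}) = \tfrac{1}{2}\sigma''(\xi)\,\Var\!\left(\tfrac{o^t_{kl}}{n_{kl}}\right);
\]
$C_0>0$ bounds $\sigma''$ on $[C_0,1-C_0]$ and $\Var(o^t_{kl}/n_{kl})\le 1/(4n_{kl})$, so each block contributes $O(1/C_0)$ and $|g^t(z)-h^t(z)|\le B K^2/C_0$ uniformly in $t,z$ for an absolute constant $B$. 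The ``sufficiently large $m(c)$'' hypothesis then amounts quantitatively to $\alpha_1>4BK^2/C_0$, at which point the per-layer signal strictly dominates the per-layer bias. For the ``noise'' piece, each $f^t(z)-g^t(z)$ is a bounded centered function of the within-layer independent Bernoullis $G^t_{ij}$, and the layers are mutually independent; Hoeffding combined with Borel--Cantelli and a union bound over the $K^N$ possible $z$ yields $\sup_z\frac{1}{T}\bigl|\sum_t[f^t(z)-g^t(z)]\bigr|\to 0$ a.s. Putting the three bounds together, $\sum_t[f^t(z)-f^t(c)]\to-\infty$ a.s.\ for every $z\not\sim c$, hence $\hat c$ equals $c$ up to permutation for all sufficiently large $T$.

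The main obstacle I anticipate is the quantitative step producing the uniform lower bound $\min_{z\not\sim c}\beta(z,c)>0$ on the combinatorial factor: one must verify that even the least favorable non-permutation $z$---for instance, one differing from $c$ by a single swap between two of the smallest classes---still forces $\Theta(m(c))$ Jensen deficits to accumulate, without double-counting contributions across $z$-blocks and while using the ``best column $m$'' guaranteed by the definition of $\delta$ in a way that is consistent across the mislabeled pairs. The remaining bias and concentration arguments are standard and crucially are insensitive to the fact that it is $T$ rather than $N$ that tends to infinity here.
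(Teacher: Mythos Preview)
Your three-term decomposition into signal $h^t(c)-h^t(z)$, bias $[g^t-h^t]$, and noise $[f^t-g^t]$, together with the convexity/Jensen argument for the signal, a Taylor bound for the bias, and layer-wise concentration for the noise, is exactly the paper's plan; your ``main obstacle'' is precisely Lemma~\ref{lemma:KLdiffernece} (from Choi et al.), which delivers $\beta(z,c)\ge 1/2$ via $r(z)\ge 1$ for every non-permutation $z$. Two places where the paper is more careful than your sketch. First, the line $E[\sigma(o/n)]-\sigma(\bar P)=\tfrac12\sigma''(\xi)\Var(o/n)$ is not a valid use of the Lagrange remainder for an expectation: $\xi$ depends on the random realization of $o/n$, and $\sigma''$ diverges at $0$ and $1$, so you cannot simply bound $\sigma''(\xi)$ by its value on $[C_0,1-C_0]$. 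The paper's Lemma~\ref{lemma:sigma} instead expands to fourth order on a region $[C_0,1-C_0]$ and controls the complement by a Chernoff bound, obtaining the sharper statement $n_{kl}[E\sigma(o/n)-\sigma(\bar P)]\to 1/2$; this lets the leading $K(K+1)/4$ cancel between $g(c)-h(c)$ and $g(z)-h(z)$ and gives a tighter threshold on $m(c)$. Second, your claim that $|g^t(z)-h^t(z)|\le BK^2/C_0$ holds \emph{uniformly in $z$} needs a separate argument for labelings $z$ with very small or empty blocks, where $o_{kl}/n_{kl}$ does not concentrate and the Taylor bound fails; the paper handles this explicitly by noting that at most $n_0(K-1)$ nodes can lie in $z$-classes of size below a fixed threshold $n_0$ and bounding their contribution to $g(z)-h(z)$ crudely.
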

The idea is that $\sum_t f^t(z)$ is the sum of independent profile log-likelihoods. 
We need $N$ to be sufficiently large so that the expectation of the profile log-likelihood  at each layer is maximized at the true labels $c$. 
Then as $T\rightarrow \infty$, we have convergence to expectation for $\sum_t f^t(z)$.
We formalize the ideas by establishing the following lemmas.

\begin{lemma}[From \citet{choi2012stochastic}]
\label{lemma:KLdiffernece}
For any label assignment $z$, let $r(z)$ count the number of nodes whose true class assignments under $c$ are not in the majority within their respective class assignment under $z$. Let 
\[
\delta=\min_{k,l} \max_m \sigma(P_{km})+\sigma(P_{lm})-2\sigma\!\left(\frac{P_{km}+P_{lm}}{2}\right).
\]
Then the expectation of the log-likelihood $h$ is maximized by $h(c)$, and
\[
h(c)-h(z)\geq \frac{r(z)}{2}\delta\min_k n_k(c). 
\]
In particular, for all $z\neq c$,
\[
h(c)-h(z)\geq \frac{1}{2}\delta\min_k n_k(c). 
\]
\end{lemma}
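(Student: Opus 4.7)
The plan is to exploit the convexity of $\sigma$ (since $-\sigma$ is the binary entropy, which is concave) and to convert the likelihood improvement from refining $z$ into a per-misclassified-node lower bound governed by $\delta$.

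First I would establish that $h$ is maximized at $c$. Introduce the common refinement $z^\star$ of $z$ and $c$, defined by $z^\star_i=(z_i,c_i)$. Since $z^\star$ refines $c$, every pair in a $z^\star$-block shares a single true-class pair $(a,b)$, so $\bar{P}(z^\star)$ coincides pointwise with $P$, and a direct block-sum identity gives $h(z^\star)=h(c)$. Since $z^\star$ also refines $z$, Jensen's inequality applied to the convex $\sigma$ inside each $z$-block yields $h(z)\le h(z^\star)$, so $h(c)$ is indeed the maximum.

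Second, I would quantify the gap $h(z^\star)-h(z)$ in terms of $r(z)$. Partition each $z$-block $k$ by true class into subsets $A_{ka}=\{i:z_i=k,\,c_i=a\}$ of sizes $R_{ak}$, and let $m_k=\argmax_a R_{ak}$ be the majority true class, so that $r(z)=\sum_k\sum_{a\ne m_k}R_{ak}$. For each non-majority pair $(k,a)$, the definition of $\delta$ applied to the rows $(a,m_k)$ of $P$ supplies a column $l^\star=l^\star(a,m_k)$ with
\[
\sigma(P_{al^\star})+\sigma(P_{m_kl^\star})-2\,\sigma\!\left(\tfrac{P_{al^\star}+P_{m_kl^\star}}{2}\right)\ge\delta.
\]
The idea is to pair each misclassified node $i\in A_{ka}$ with a distinct majority partner in $A_{km_k}$, and extract from $h(z^\star)-h(z)$ only the contribution of edges from that pair to nodes of true class $l^\star$. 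Each such sub-collection is a balanced two-point mixture of $P_{al^\star}$ and $P_{m_kl^\star}$ per target, so its convexity gap is at least $\delta$ per target. Since there are at least $\min_k n_k(c)$ nodes of true class $l^\star$ available as targets, each misclassified node contributes at least $\delta\,\min_k n_k(c)$ to the gap, and summing over the $r(z)$ such nodes (with a factor $1/2$ from the $i<j$ edge convention) yields $h(c)-h(z)\ge\tfrac12\,r(z)\,\delta\,\min_k n_k(c)$. The ``in particular'' bound follows because $r(z)\ge 1$ whenever $z\ne c$ up to relabeling.

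The main obstacle is the bookkeeping in the pairing step: $\delta$ encodes a \emph{balanced} two-point gap, whereas $\bar{P}_{kl}(z)$ is an unbalanced many-component average, and a $z$-block may contain fewer majority partners than misclassified nodes. The cleanest workaround is to peel off balanced two-point sub-gaps one misclassified node at a time while allocating partners and targets so that no edge is double-counted; when distinct majority partners run out, one can subdivide the targets by true class to keep the assignments injective, or invoke the strong convexity of $\sigma$ (whose second derivative $1/(p(1-p))\ge 4$ under $C_0>0$) to recover the same order-of-magnitude bound.
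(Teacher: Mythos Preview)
Your proposal is essentially the same approach the paper invokes: the paper's proof is simply a citation of Lemmas~A1--A2 of \citet{choi2012stochastic} together with a three-line sketch stating that (i) $h$ is maximized at the true parameters, (ii) any refinement of a partition increases $h$, and (iii) one can find a refinement exhibiting at least $r(z)/2$ node pairs each connecting to at least $\min_k n_k(c)$ nodes with gap $\ge\delta$. Your common-refinement argument for (i)--(ii) and your pairing-with-majority argument for (iii) are exactly this outline fleshed out, and your honest acknowledgment of the partner-shortage bookkeeping mirrors the care needed in the original Choi et al.\ proof; the only quibble is that the factor $1/2$ in the paper's sketch arises from counting $r(z)/2$ \emph{pairs}, not from the $i<j$ edge convention as you suggest.
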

Lemma \ref{lemma:KLdiffernece} shows the expectation of the log-likelihood is maximized at the true parameters, and the difference of the true parameters and any other candidate is at least some distance apart which depends on the column difference of the probability matrix.
However, as we work with the profile log-likelihood, we establish Lemmas \ref{lemma:sigma} and \ref{lemma:profile-mle} to bound the difference between the expectation of the profile log-likelihood and the complete log-likelihood.

\begin{lemma}
\label{lemma:sigma}
Let $x\sim \frac{1}{N}\text{Bin}(N,p)$.
For $p\in (0,1)$, 
\[
E(\sigma(x))\rightarrow \sigma(p)+\frac{1}{2N} + O\left(\frac{1}{N^2}\right), \text{ as } N\rightarrow \infty.
\]
\end{lemma}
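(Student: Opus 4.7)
The plan is to use a Taylor expansion of $\sigma$ at $p$ together with binomial moment calculations, combined with a concentration argument to control the contribution from $x$ near the endpoints where $\sigma$ fails to be smooth. Direct differentiation gives $\sigma'(p)=\log\bigl(p/(1-p)\bigr)$ and $\sigma''(p)=1/\bigl(p(1-p)\bigr)$. Formally expanding,
\[
\sigma(x)=\sigma(p)+\sigma'(p)(x-p)+\tfrac{1}{2}\sigma''(p)(x-p)^{2}+\tfrac{1}{6}\sigma'''(p)(x-p)^{3}+\tfrac{1}{24}\sigma^{(4)}(\xi)(x-p)^{4},
\]
and using $E[x]=p$, $E[(x-p)^{2}]=p(1-p)/N$ would yield the leading term $\tfrac{1}{2}\sigma''(p)\cdot p(1-p)/N=\tfrac{1}{2N}$, with all further terms expected to be $O(1/N^{2})$ from standard binomial moment scalings.

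The technical obstacle, and the only nontrivial part, is that $\sigma'''$ and $\sigma^{(4)}$ are unbounded near $0$ and $1$, so the remainder cannot be uniformly bounded over $[0,1]$. To handle this I would fix $\eta\in(0,\min(p,1-p)/2)$ and split
\[
E[\sigma(x)]=E\!\left[\sigma(x)\mathbf{1}_{|x-p|\leq\eta}\right]+E\!\left[\sigma(x)\mathbf{1}_{|x-p|>\eta}\right].
\]
On the bulk event $\{|x-p|\leq\eta\}$, $\xi$ lies in $[p-\eta,p+\eta]\subset(0,1)$, where $\sigma^{(4)}$ is bounded by some constant $C(p,\eta)$; combined with $E[(x-p)^{4}]=O(1/N^{2})$ and $E[(x-p)^{3}]=(1-2p)p(1-p)/N^{2}=O(1/N^{2})$ for $x=Y/N$ with $Y\sim\text{Bin}(N,p)$, the bulk integrates to $\sigma(p)+\tfrac{1}{2N}+O(1/N^{2})$, after correcting the moment identities (which are computed over the full distribution) by the tail, whose contribution is handled next.

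On the tail $\{|x-p|>\eta\}$, Hoeffding's inequality gives $P(|x-p|>\eta)\leq 2e^{-2N\eta^{2}}$, and $\sigma$ is bounded on $[0,1]$ by $0$ above and $-\log 2$ below (with the convention $0\log 0=0$ at the endpoints $x=0,1$), so the contribution of the tail to $E[\sigma(x)]$ is $O(e^{-cN})$, which is $o(1/N^{2})$. The same exponential bound shows that replacing full-distribution moments by their bulk truncations introduces only exponentially small errors, so the cubic and quartic remainder terms remain $O(1/N^{2})$ after truncation. Combining the bulk expansion and the exponentially small tail correction yields the claimed asymptotic $E[\sigma(x)]=\sigma(p)+\tfrac{1}{2N}+O(1/N^{2})$. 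The main difficulty is organizing this splitting cleanly so that the endpoint singularity of $\sigma'''$ never actually appears in the final bound; everything else is routine moment arithmetic for the binomial.
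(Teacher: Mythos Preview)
Your proposal is correct and follows essentially the same route as the paper: a Taylor expansion of $\sigma$ around $p$, a bulk/tail split via a concentration inequality (the paper uses a Chernoff bound on the region $[C_0,1-C_0]$ with $C_0=p/2$, you use Hoeffding on $\{|x-p|\le\eta\}$), bounding $\sigma^{(4)}$ on the bulk, and using the binomial third and fourth central moments to show the cubic and quartic terms are $O(1/N^{2})$ while the tail is exponentially small. The only cosmetic differences are your choice of truncation radius and the concentration inequality used; otherwise the argument matches the paper's proof.
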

\begin{lemma}
\label{lemma:profile-mle}
Assume $C_0\leq P_{kl}\leq 1-C_0$, $C_0>0$. For any $\delta_0>0$ and any $z$, if $\min_k n_k(z)$ is large enough, then
the difference between the expectation of the profile log-likelihood $g(z)$ and the expectation of the complete log-likelihood $h(z)$ is bounded in the following manner:
%g(z)-h(z)\leq t+\frac{\beta K^2}{\min_{a,b} n_{ab}} 
\[
\left|g(z)-h(z)-\frac{K(K+1)}{4}\right| \leq  \delta_0.
\]
\end{lemma}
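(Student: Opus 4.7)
My plan is to decompose $g(z)-h(z)$ block by block over the $K(K+1)/2$ blocks induced by $z$ and show that each block contributes approximately $1/2$, so that summation produces the constant $K(K+1)/4$ up to a small remainder. Concretely, the definitions of $g$ and $h$, combined with the identity $E[o_{kl}(z)/n_{kl}(z)]=\bar{P}_{kl}(z)$, yield
\[
g(z)-h(z)=\sum_{k\le l}n_{kl}(z)\bigg\{E\!\left[\sigma\!\left(\frac{o_{kl}(z)}{n_{kl}(z)}\right)\right]-\sigma(\bar{P}_{kl}(z))\bigg\},
\]
so the analysis reduces to a single block at a time.

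The core step would be a second-order Taylor expansion of $\sigma$ about $\bar{P}_{kl}(z)$, followed by taking expectations. The linear term vanishes because $o_{kl}/n_{kl}$ has mean $\bar{P}_{kl}$; the quadratic term is $\tfrac12\sigma''(\bar{P}_{kl})\Var(o_{kl}/n_{kl})$, and the identity $\sigma''(p)\,p(1-p)=1$ (which drives Lemma \ref{lemma:sigma}) shows that this quantity is approximately $1/(2n_{kl})$. Multiplying by $n_{kl}$ and summing over the $K(K+1)/2$ blocks then produces the main term $K(K+1)/4$. The third- and higher-order Taylor remainders would be controlled using the hypothesis $C_0\le P^t_{kl}\le 1-C_0$, which keeps the derivatives $\sigma^{(m)}$ bounded on a neighborhood of $\bar{P}_{kl}$, together with Hoeffding's inequality, which confines $o_{kl}/n_{kl}$ to that neighborhood with probability at least $1-2e^{-2C_0^2 n_{kl}}$. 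The per-block remainder is then of order $n_{kl}^{-1/2}$ and can be driven below any prescribed $\delta_0$ by taking $\min_k n_k(z)$ sufficiently large.

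The chief obstacle is the case $z\ne c$: the edges inside a single block under $z$ are independent Bernoullis with \emph{different} success probabilities $p_{ij}=P_{c_ic_j}$, so Lemma \ref{lemma:sigma} does not apply verbatim. I would instead work directly with the Poisson-binomial variance $n_{kl}^{-2}\sum_{(i,j)}p_{ij}(1-p_{ij})$ and show that the Jensen-type gap between it and $\bar{P}_{kl}(1-\bar{P}_{kl})/n_{kl}$ is bounded by a $C_0$-dependent constant per block, and hence absorbed into $\delta_0$ once $m(z)$ is large. Uniformity over the finite set of labelings $z$ of $N$ nodes is then automatic, completing the sketch.
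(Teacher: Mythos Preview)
Your decomposition and Taylor-expansion strategy is exactly the paper's: the published proof writes $g(z)-h(z)=\sum_{k\le l}n_{kl}\bigl\{E[\sigma(o_{kl}/n_{kl})]-\sigma(\bar P_{kl})\bigr\}$, asserts $o_{kl}(z)/n_{kl}(z)\sim n_{kl}^{-1}\mathrm{Bin}(n_{kl},\bar P_{kl}(z))$, and invokes Lemma~\ref{lemma:sigma} block by block. You are right to flag that for $z\ne c$ the edges inside a $z$-block carry heterogeneous success probabilities $p_{ij}=P_{c_ic_j}$, so $o_{kl}(z)$ is Poisson--binomial rather than binomial; the paper glosses over this.

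Your proposed repair, however, does not close the gap. After multiplying by $n_{kl}$, the quadratic contribution equals
\[
\frac{1}{2}\,\frac{n_{kl}^{-1}\sum_{(i,j)}p_{ij}(1-p_{ij})}{\bar P_{kl}(1-\bar P_{kl})},
\]
and the Jensen discrepancy between this and $1/2$ is a \emph{fixed} number in $[0,1/2]$ determined solely by the mixture of true probabilities appearing in the block---it does not shrink as $m(z)\to\infty$. (For instance, if a $z$-block contains equal proportions of edges with probabilities $C_0$ and $1-C_0$, the ratio is $4C_0(1-C_0)<1$ for every $n_{kl}$.) Hence the two-sided bound in the lemma cannot be recovered this way for heterogeneous $z$. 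What does survive is the one-sided estimate $g(z)-h(z)\le K(K+1)/4+\delta_0$, since the Jensen inequality points the right way; combined with the genuine two-sided bound at $z=c$ (where $o_{kl}(c)$ really is binomial and Lemma~\ref{lemma:sigma} applies verbatim), this is in fact all that the proof of Theorem~\ref{theorem:consistency} requires.
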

Lemma \ref{lemma:sigma} utilizes a Taylor series expansion. 
For simplicity, we use big $O(\cdot)$ notation instead of specifying an actual bound. 
Readers can refer to Appendix \ref{sec:ProofML} for the bound and the constants in the bound.
Lemma \ref{lemma:profile-mle} uses Lemma \ref{lemma:sigma} and shows that with a sufficiently large number of nodes, the difference of the expectation of the profile log-likelihood and complete log-likelihood is $K(K+1)/4$ and a negligible term $\delta_0$.
Combining the lemmas and using the concentration inequality, we can show that Theorem \ref{theorem:consistency} provides sufficient conditions for the consistency of the multi-graph SBM. 
The proofs of Lemmas \ref{lemma:KLdiffernece}--\ref{lemma:profile-mle} and Theorem \ref{theorem:consistency} can be found also in Appendix \ref{sec:ProofML}. 

\begin{remark}
The main difference between the $N\rightarrow \infty$ case considered in most previous work and the $T\rightarrow \infty$ case that we consider is that, for $N\rightarrow \infty$, a direct bound is put on $f$ and $l$. For $T\rightarrow \infty$, we need only to bound the \emph{expectation} of $f$ and $l$. This is newly studied here.
In other words, for some particular class connection probability matrix $P$, the number of nodes required in a single network to have an accurate estimate is much larger than what is needed in a multi-graph with a growing number of layers.
\end{remark}

\subsubsection{Variational approximation}
\label{sec:variational}
The MLE is computationally infeasible for large networks because the number of candidate class assignments grows exponentially with the number of nodes.
To overcome the computational burden, variational approximation, which replaces the joint distribution with independent marginal distributions, can be used to approximate the MLE.
\citet{daudin2008mixture} has a detailed discussion of variational approximation in the SBM.
We adapt it to the multi-graph SBM, resulting in the following update equations:
\begin{gather*}
b_{ik} \propto \pi_k \prod_{j \neq i} \prod_{t} \prod_l \left[(P_{kl}^t)^{g_{ij}^t} (1-P_{kl}^t)^{1-g_{ij}^t}\right]^{b_{jl}} \\
\pi_k \propto \sum_i b_{ik} \\
P_{kl}^t = \frac{\sum_{i \neq j} b_{ik} b_{jl} g_{ij}^t} {\sum_{i \neq j} b_{ik} b_{jl}},
\end{gather*}
where the $b_{ik}$'s denote the variational parameters.
The derivation is straightforward; we provide details in Appendix \ref{sec:Variational}.

Variational approximation has been shown to be consistent in the SBM \citep{celisse2012consistency,bickel2013asymptotic}.
We conjecture that the performance of variational approximation is also good in the multi-graph SBM.
Unless otherwise specified we use variational approximation to replace the MLE in all experiments. 

\begin{figure*}[tp]
\centering
\subfloat[$p=0.1$]{\includegraphics[width=2.5in]{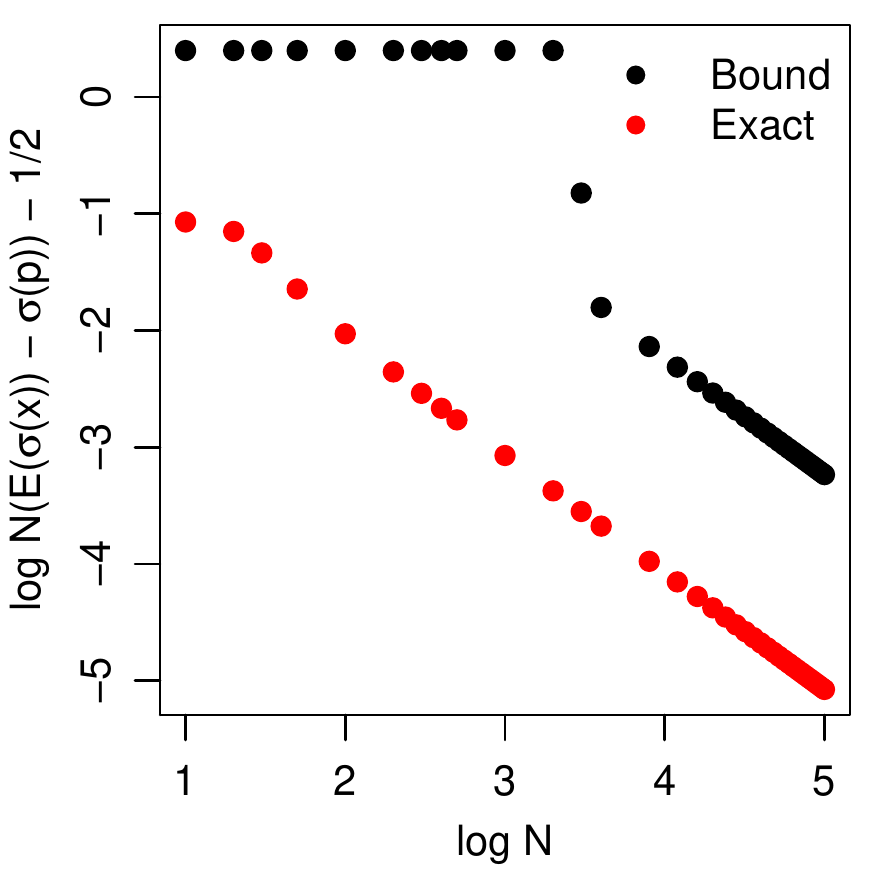}} \qquad
\subfloat[$p=0.25$]{\includegraphics[width=2.5in]{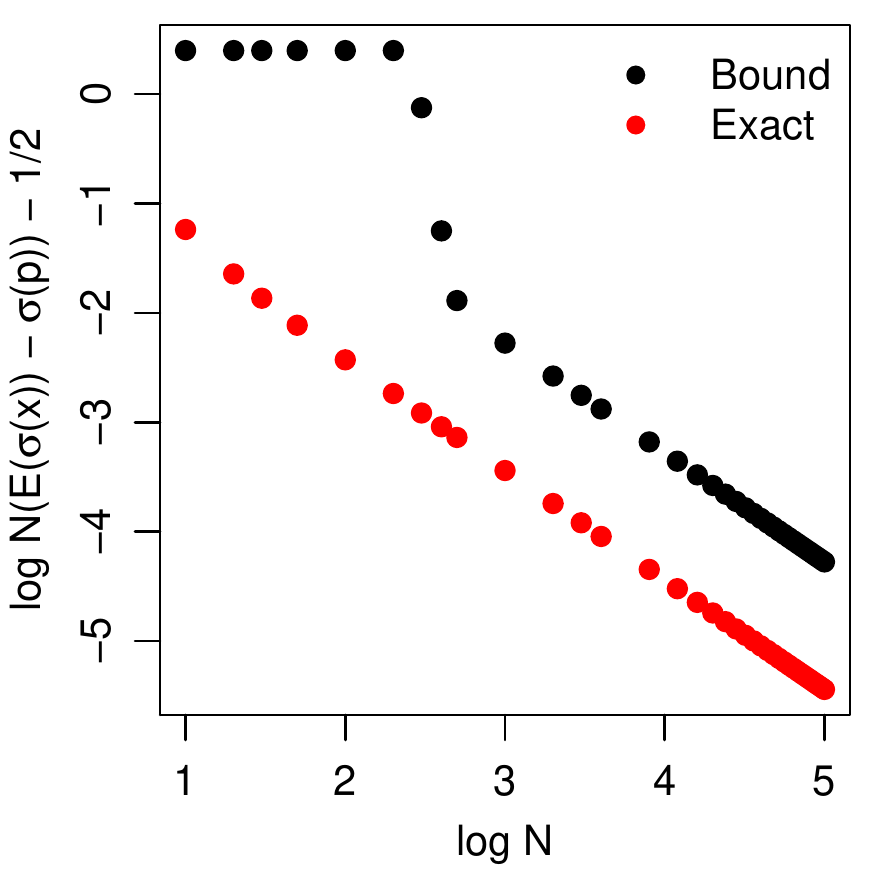}} \qquad
\subfloat[$p=0.4$]{\includegraphics[width=2.5in]{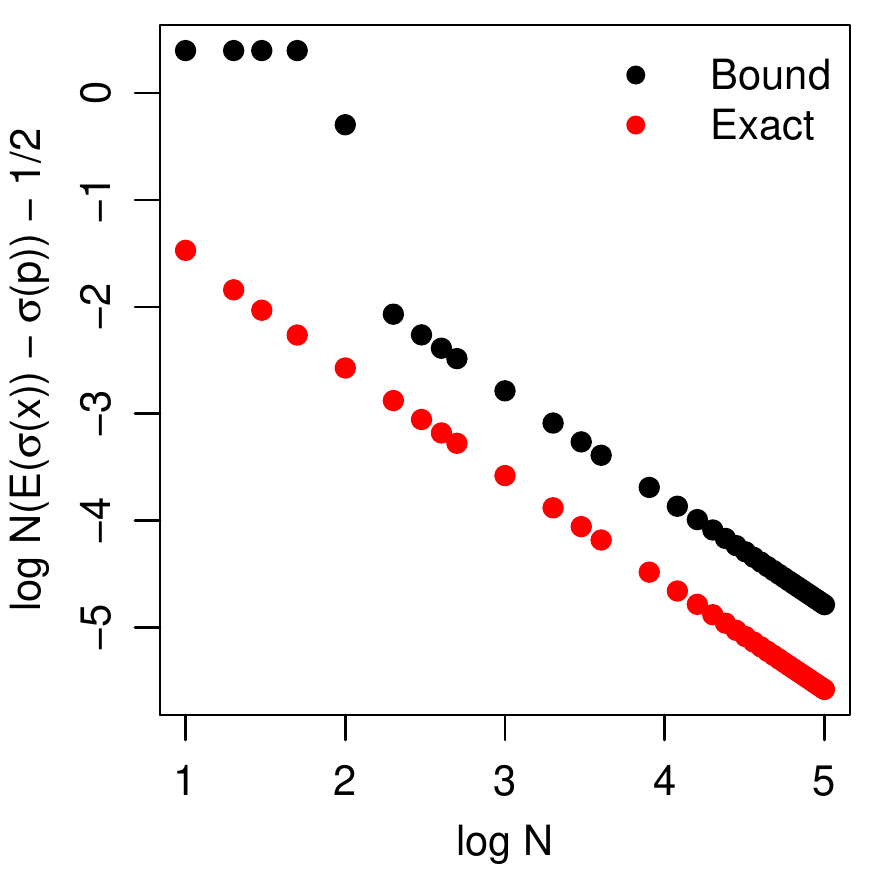}}
\caption{Comparison of bound in Lemma \ref{lemma:sigma} to exact values of $N(E(\sigma(x))-\sigma(p))-1/2$ for varying $N$ and $p$. The tightness of the bound affects the minimum number of nodes required to guarantee consistency in Theorem \ref{theorem:consistency}.}
\label{figure:bounds}
\end{figure*}

\section{Experiments}
\label{sec:expri}

\subsection{Numerical illustration}

We begin with a toy example where we investigate empirically how many nodes are needed for the profile MLE to correctly recover the classes as $T \rightarrow \infty$. 
Due to the computational intractability of computing the exact profile MLE, we consider very a small network with $N=16$ nodes and $K=2$ classes where each class has $8$ nodes. Consider two multi-graph SBMs with the following probability matrices:
\[
\text{Case 1: }P^t \equiv \left( \begin{array}{cc}
0.55 & 0.45 \\
0.45 & 0.55 \end{array} \right)
\]
\[
\text{Case 2: }
P^t \equiv \left( \begin{array}{cc}
0.51 & 0.49 \\
0.49 & 0.51 \end{array} \right)
\]

The $\delta$ (defined in Theorem \ref{theorem:consistency}) corresponding to the row difference of $P^t$ is much smaller in case 2. 
Empirically the profile MLE succeeds to get the true labels in case 1 while it fails in case 2.
Further analysis shows that in order to have consistency given the class connection probability matrix $P^t$ in case 2, the total number of nodes should be at least $40$.
This toy example demonstrates that conditions on the probability matrices and network size are necessary for consistency. 
Theorem \ref{theorem:consistency} provides \emph{sufficient} conditions. 

Next we investigate the tightness of the conditions in Theorem \ref{theorem:consistency}. 
The tightness of Lemma \ref{lemma:KLdiffernece} was studied by \citet{choi2012stochastic}. 
We check the tightness of Lemma \ref{lemma:sigma}. 
For different $p$, we can calculate the exact value of $N(E(\sigma(x))-\sigma(p))-1/2$ and compare it to the bound from Lemma \ref{lemma:sigma}. 
Figure~\ref{figure:bounds} shows that the bound is loose for small $N$, but has almost the same asymptotic decay as the exact calculation.
For small $N$, the remainder in Taylor expansion causes deviation. 
Also the bounds are looser for $p$ closer to $0$ or $1$ but still informative in most cases. 

% latex table generated in R 3.0.3 by xtable 1.7-4 package
% Tue Oct 21 10:34:51 2014
\begin{table}[t]
\caption{Minimum number of nodes $N$ required for consistency of the profile MLE with $K=2$ classes under different values for parameters $C_0$ and $\delta$ from Theorem \ref{theorem:consistency}.}
\label{tab:MinNodes}
\centering
\begin{tabular}{|c|cccccc|}
  \hline
  \diagbox{$\delta$}{$C_0$}& 0.3 & 0.25 & 0.2 & 0.15 & 0.1 & 0.05 \\ 
  \hline
  0.165 &  42 &  50 &  64 &  88 & 124 & 184 \\ 
  0.091 &  44 &  52 &  66 &  92 & 142 & 234 \\ 
  0.040 &  46 &  56 &  70 &  94 & 148 & 314 \\ 
  0.010 &  66 &  68 &  74 & 100 & 156 & 330 \\ 
  \hline
\end{tabular}
\end{table} 

For the special case of $K=2$ classes, we can calculate all of the constants in the sufficient conditions in Theorem \ref{theorem:consistency} for different values of $C_0$ and $\delta$ by enumerating cases. 
Details are provided in Appendix \ref{sec:MinNodes}. 
Table \ref{tab:MinNodes} shows the smallest number of nodes $N$ that is sufficient for consistency of the profile MLE to hold for different values of $C_0$ and $\delta$. 
Note that the minimum $N$ is in the tens or hundreds, suggesting that the bounds in Theorem \ref{theorem:consistency} are not overly loose and are indeed of practical significance. 

\subsection{Comparison with majority voting}

As previously mentioned, majority voting is another way to utilize inference methods for a single network on multi-graphs. 
We consider two majority vote methods as baselines for comparison, one that utilizes spectral clustering on each layer, and one that applies a variational approximation to each layer.
When using majority voting between different layers of the network, the estimated class labels for each layer must first be aligned or matched. 
We utilize the Hungarian algorithm \citep{Kuhn1955} to compute the maximum agreement matching between the estimated labels at layer $t$ with the majority vote up to layer $t-1$. 

We conduct simulations to compare our proposed methods of spectral clustering on the mean graph and profile maximum-likelihood estimation with the majority vote baselines. 
We consider a well-studied scenario where we have $128$ nodes initialized randomly into $4$ classes \citep{newman2004finding}. 
For each layer, the within-class connection probability is $0.0968$, and the between-class connection probability is $0.0521$. 
Under such connection probabilities, the classes are below the detectability limit  \citep{Decelle2011} for a single layer, so the class estimation accuracy from a single layer is very low. 
We increase the number of layers and observe how the accuracy changes. 

\begin{figure}[t]
\centering
\includegraphics[width=4in]{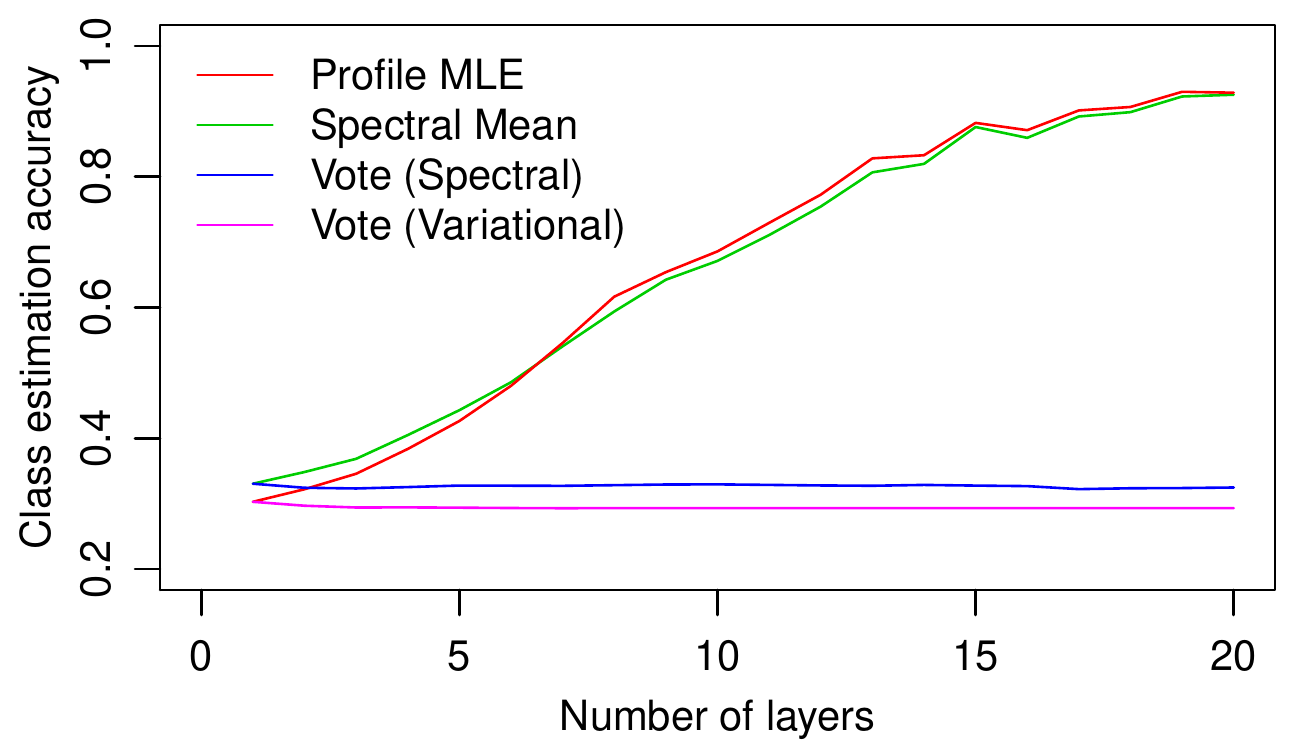}
\caption{Simulation experiment comparing the proposed methods of profile MLE and spectral clustering on the mean graph with two majority vote baselines. The proposed methods increase in accuracy as the number of layers increases, but the two heuristic methods based on majority vote do not.}
\label{figure:majorityVote}
\end{figure}  

Figure~\ref{figure:majorityVote} shows the accuracy of the two proposed methods compared to the two majority voting methods averaged over $100$ replications. 
Both the profile MLE and spectral clustering on the mean graph have the anticipated increasing accuracy over time. 
But the accuracies of the two heuristic majority vote methods do not improve. 
Though one may expect the errors in majority vote to be canceled out over time, these results show that, without careful averaging of errors, we cannot gain from the multiple layers. 
We find that this is due to choosing connection probabilities below the detectability limit; if we make the estimation problem easier by increasing the within-class probability above the detectability limit, then the majority vote methods do improve with increasing layers, albeit much slower than the methods we propose in this paper.

\subsection{MIT Reality Mining data} 
Next we apply our model on the MIT Reality Mining data set \citep{eagle2006reality}. 
This data set comprises $93$ students and staff at MIT in the 2004-2005 school year during which time their cell phone activities were recorded. 
We construct dynamic networks based on physical proximity,
which was measured using scans for nearby Bluetooth devices
at $5$-minute intervals. 
We exclude data near the beginning and end of the experiment where participation was low. 
We discretize time into $1$-week intervals, similar to \citet{Mutlu2012,Xu2014Adaptive}, resulting
in $39$ time steps between August 2004 and May 2005.

\begin{figure}[t]
\centering
\includegraphics[width=4in]{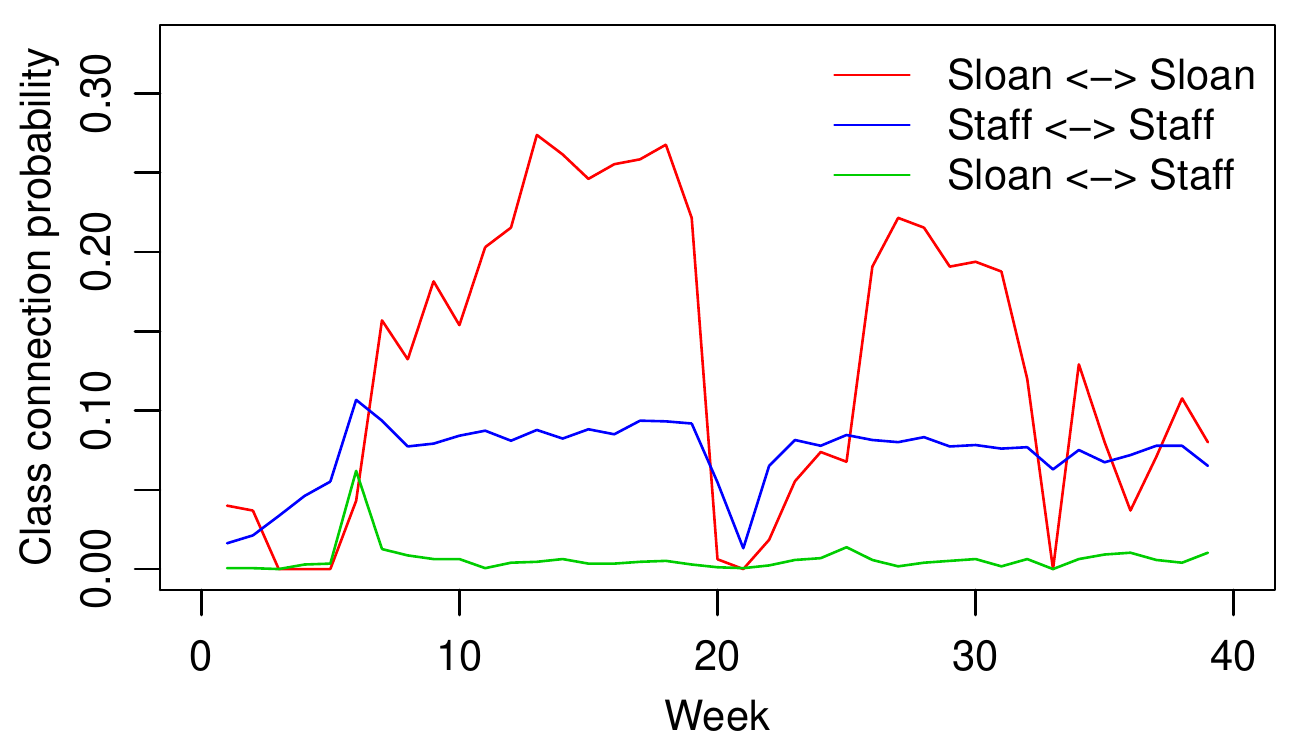}
\caption{Estimates of class connection probabilities in the Reality Mining data set. The probabilities vary significantly over time, particularly for edges between Sloan students.}
\label{figure:RealityMining}
\end{figure}

\begin{table}[t]
\caption{Class estimation accuracy in the Reality Mining data set given data up to week listed in the first column. 
Best performer in each row is listed in bold. 
Both the proposed spectral clustering on the mean graph and profile maximum-likelihood estimation approaches improve over time, but majority vote does not.}
\label{tab:RealityMining}
\centering
\begin{tabular}{|cccc|}
  \hline
  Week & Maj.~vote & Spectral Mean & Profile MLE \\ 
  \hline
  10 & {\bf 0.76} & 0.62 & 0.57 \\ 
  15 & 0.82 & 0.94 & {\bf 0.95} \\ 
  20 & 0.83 & 0.95 & {\bf 0.98} \\ 
  25 & 0.78 & 0.95 & {\bf 0.99} \\ 
  30 & 0.80 & 0.97 & {\bf 0.99} \\ 
  35 & 0.80 & 0.97 & {\bf 0.99} \\ 
  End & 0.77 & 0.97 & {\bf 0.99} \\ 
  \hline
\end{tabular}
\end{table}

We treat the affiliation of participants as ground-truth class labels and test our proposed methods. 
Two communities are found: one of $26$ Sloan business school students, and one of $67$ staff working in the same building.
Since degree heterogeneity may cause problems in detecting communities using the SBM \citep{karrer2011stochastic}, we reduce its impact by connecting each participant to the $5$ other participants who spent the most time in physical proximity during each week. 
Figure~\ref{figure:RealityMining} shows the empirical block connection probabilities within and between the two classes, estimated by the profile MLE.
The class connection probabilities vary significantly over time, which validates the importance of the varying class connection probability assumption in our model. 
Notice that the two communities become well-separated around week $8$. 
The class estimation accuracies for the different methods are shown in Table~\ref{tab:RealityMining}. 
Since the community structure only becomes clear at around week $8$, the spectral and profile MLE methods are initially worse than majority voting but quickly improve and are superior over the remainder of the data trace. 
By combining information across time, the proposed methods successfully reveal the community structure while majority voting continues to improperly estimate the classes 
of about $20\%$ of the people. 

\subsection{AU-CS multi-layer network data}

\begin{figure*}[tp]
\newsavebox{\tempbox}
\centering
\subfloat[Co-authorship]{\includegraphics[scale=0.28]{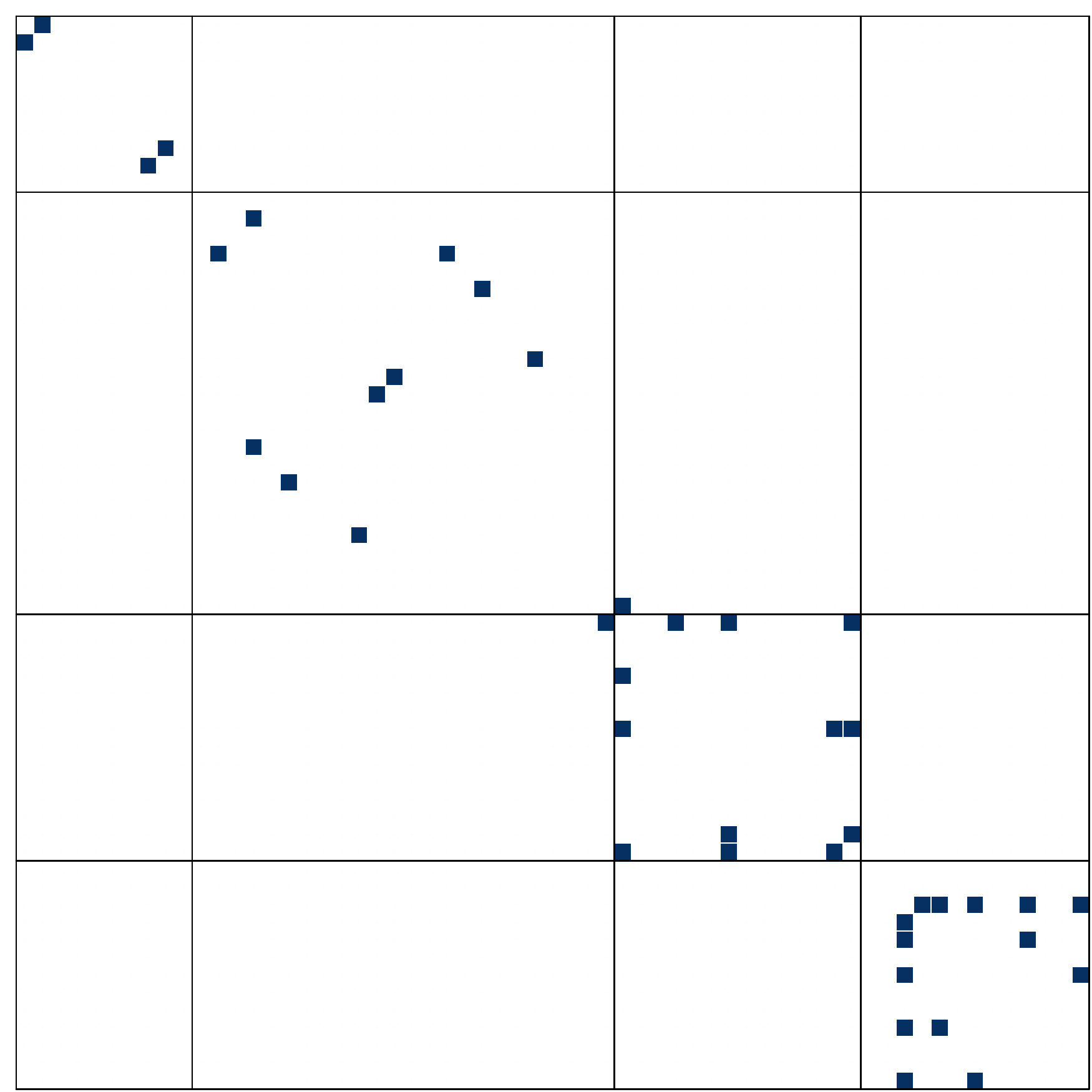}} \qquad\qquad
\subfloat[Facebook]{\includegraphics[scale=0.28]{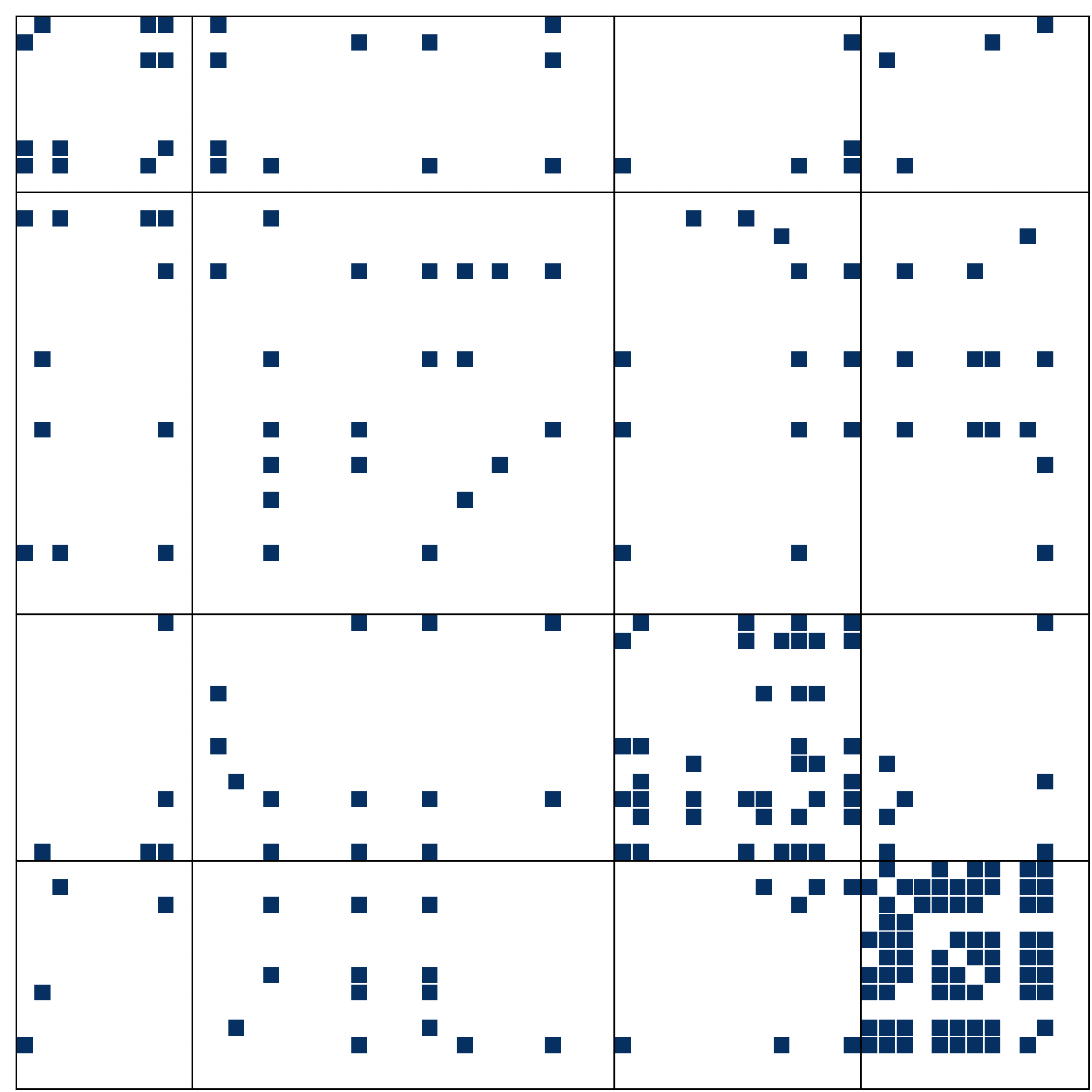}} \qquad\qquad
\subfloat[Leisure]{\includegraphics[scale=0.28]{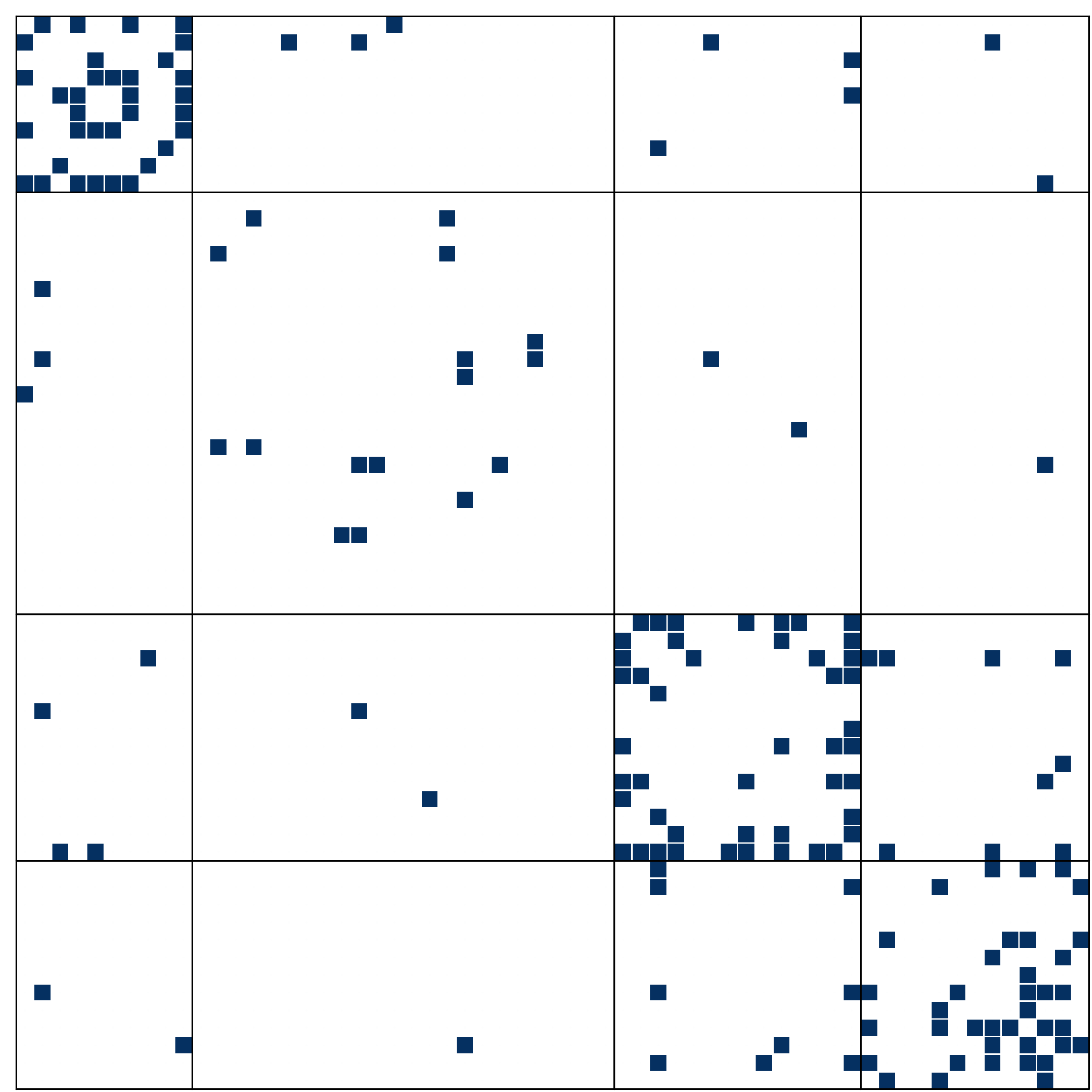}} \qquad\qquad
\subfloat[Lunch]{\includegraphics[scale=0.28]{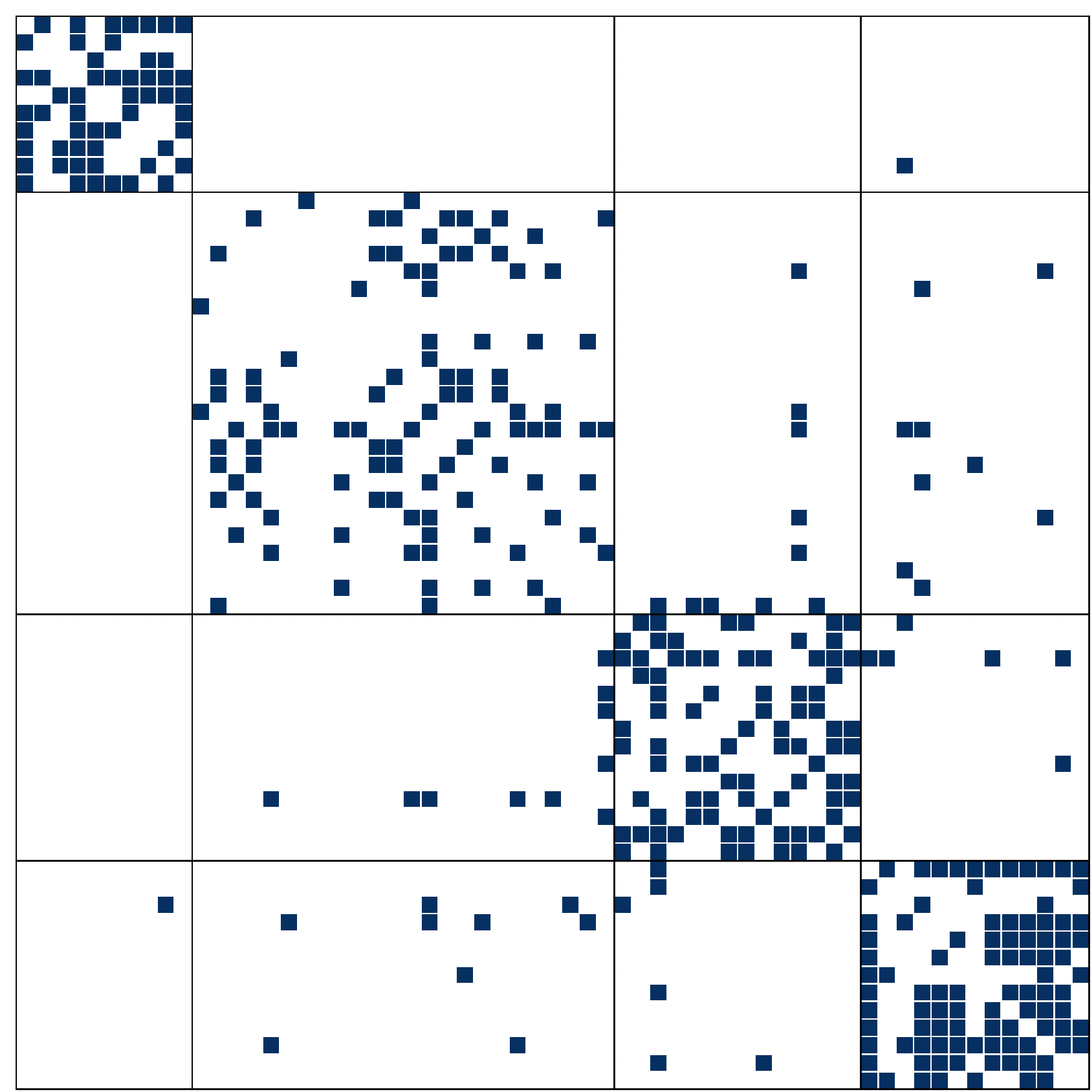}} \qquad\qquad
\sbox{\tempbox}{\includegraphics[scale=0.28]{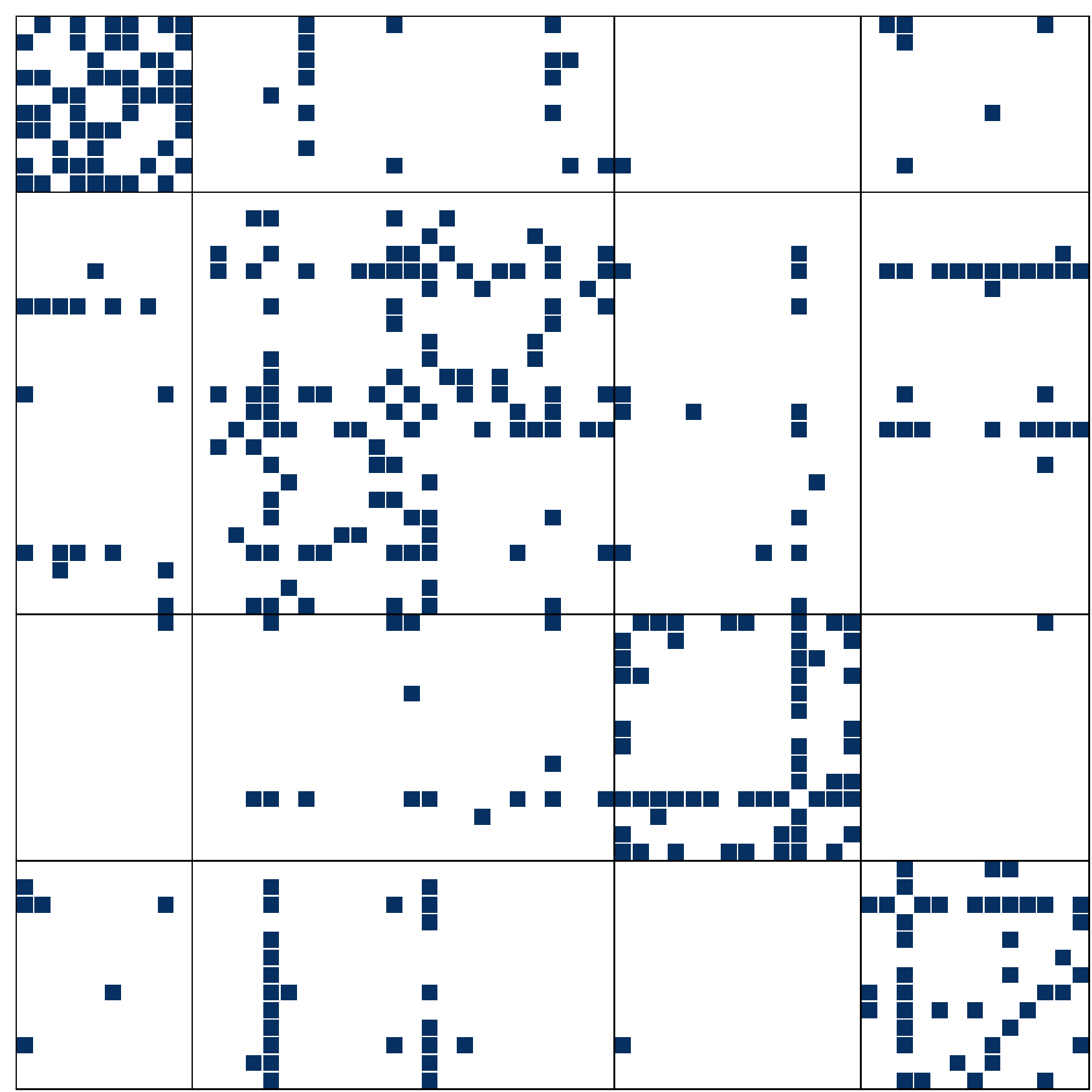}}%
\subfloat[Work]{\usebox{\tempbox}}%
\qquad\qquad
\subfloat[ICL (lower denotes better fit)]{
\parbox[b]{1.88in}{
\centering
\begin{small}
\begin{sc}
\renewcommand\arraystretch{1.2}
\renewcommand\tabcolsep{12pt}
\begin{tabular}[b]{|cc|}
\hline
$K$ & ICL \\
\hline
2 & 4087 \\
3 & 3914 \\
{\bf 4} & {\bf 3830} \\
5 & 3841 \\
6 & 3878 \\
\hline
\end{tabular}
\end{sc}
\end{small}
\\
\vspace{24pt}
}
}
\caption{The estimated community structures in the AU-CS multi-layer networks overlaid onto the adjacency matrices of different relations. The dots denote connections (edges), and the grids correspond to SBM blocks.}
\label{figure:multilayer}
\end{figure*}

We look at another example from a multi-layer network comprising five kinds of self-reported on-line and off-line relationships between the employees of a research department: Facebook, leisure, work, co-authorship, and lunch \citep{aucs}.
We assume the class structure to be invariant across the different types of relations and apply our model.
For model selection, we extend the Integrated Classification Likelihood (ICL) criterion proposed by \citet{daudin2008mixture} for the single network SBM to multi-graphs to select the number of blocks $K$. 
Specifically we maximize
\begin{equation*}
-2Q(\vec{G})+(K-1)\log N + \frac{TK(K+1)}{2} \log \frac{N(N-1)}{2},
\end{equation*}
where $Q(\vec{G})$ is the variational approximation to the complete log-likelihood.
We initialize the variational approximation with different randomizations as well as the spectral clustering solution. 
The term is maximized at $K=4$.

Figure \ref{figure:multilayer} shows the estimated $4$ classes overlaid onto the adjacency matrix of each relation.
Although we have no ground truth for this data set, we detect well-separated communities in all relations aside from co-authorship, which is an extremely sparse layer. 
Notice once again the difference in empirical connection probabilities over the multiple layers of the multi-graph. 

For this data set, we do not have ground truth labels to evaluate the class estimation accuracy. 
We note, however, that the ICL obtained by our variational approximation algorithm is much better than the ICLs obtained by fitting an SBM on the mean graph and by majority vote, both of which are over $4000$.

\section{Discussion}
\label{sec:dis}
In this paper, we investigated the multi-graph stochastic block model applied to dynamic and multi-layer networks with invariant class structure.
Both spectral clustering on the mean graph and maximum-likelihood estimation are proved to be consistent for a fixed number of nodes when we have an increasing number of network layers, provided certain sufficient conditions are satisfied. 

There are several interesting avenues for extensions of our analysis. 
First we can add a layer of probabilistic modeling on the probability matrices if we have additional information. 
Since dynamic networks usually vary smoothly over time, we can put a state-space model on the adjacency array \citep{xu2014dynamic}. 
We can also use a hierarchical model on the probability matrices to couple them for analyzing multi-layer networks. 
Since our sufficient conditions do not consider such additional structure, an interesting area of future work would be to derive sufficient conditions that utilize the structure on the probability matrices, which would likely produce tighter bounds. 
It would also be interesting to draw connections to recent work on consistent estimation for populations of networks \citep{durante2014nonparametric}, for which no coupling between samples (layers) exists. 

\appendix

\section{Proof of Theorem \ref{theorem:spectral}}
\label{sec:ProofSpectral}
\begin{proof}
Denote the latent class label for each node as a vector $\vec{C}_i=(C_{i1},...,C_{iK})$ where
\begin{equation*}
C_{ij}=\begin{cases} 0, & \, c_i\neq j \\
			1, & \, c_i=j \end{cases}.
\end{equation*} 
Define the $N\times K$ matrix
\begin{equation*}
C= \begin{pmatrix}
\vec{C}_1 \\
\vdots \\
\vec{C}_N 
\end{pmatrix}. 
\end{equation*}
Notice that 
\[
E(\bar{G}) = E (\bar{\Phi}) = CE(\bar{P})C'=CMC',
\]
where the last equality follows from ergodicity of the process $\{P^t\}$. 
Intuitively $\bar{G}$ would converge to $CMC'$.
Since the matrix of eigenvectors of $CMC'$ only has $K$ distinct rows, the eigenvectors of $\bar{G}$ would converge to those of $CMC'$, and eventually the rows of the eigenvector matrix would be well-separated for nodes in different classes. 

More formally, we first bound the difference of $\bar{G}$ and $CMC'$.
Let $\|A\|_F = (\sum_{i,j} a_{ij}^2)^{1/2}$ denote the Frobenius norm of a matrix $A$. 
We have 
\begin{equation*}
E(\|\bar{G}-CMC'\|_F^2) = \sum_{i,j} \Var (\bar{G}_{ij})=\sum_{i,j} \left[E(\Var(\bar{G}_{ij}|\bar{\Phi}_{ij}))+\Var(E(\bar{G}_{ij}|\bar{\Phi}_{ij}))\right].
\end{equation*}
The first term can be bounded by
\begin{equation*}
\Var(\bar{G}_{ij}|\bar{\Phi}_{ij}) =  \frac{\bar{\Phi}_{ij}(1-\bar{\Phi}_{ij})}{T} \leq \frac{1}{4T}
\end{equation*}
because $0 \leq \bar{\Phi}_{ij} \leq 1$. 
For the second term,
\begin{equation*}
\Var(E(\bar{G}_{ij}|\bar{\Phi}_{ij})) =\Var(\bar{\Phi}_{ij})
=\Var(\bar{P}_{c_i c_j})
=\frac{\epsilon_{c_i c_j}^2}{T}.
\end{equation*}
Therefore, 
\[
E(\|\bar{G}-CMC'\|_F^2) \leq \frac{N^2(1+4\epsilon^2)}{4T}
\]
where $\epsilon = \displaystyle\max_{c_i,c_j} \epsilon_{c_i c_j}$. 
By the Markov inequality, for any $\delta > 0$,
\[
P(\|\bar{G}-CMC'\|_F^2>\delta)\leq \frac{N^2(1+4\epsilon^2)}{4T\delta}\rightarrow 0 \text{ as } T\rightarrow \infty
\]
As a result, the spectral norm $\|\bar{G}-CMC'\|\leq \|\bar{G}-CMC'\|_F$ goes to 0 too. 
Based on lemma A.2 by \citet{oliveira2009concentration}, if $M$ has $K$ distinct eigenvalues, then the eigenvectors of $\bar{G}$ are close to the corresponding eigenvectors of $CMC'$. 
That is, let $u_i$ be the eigenvector corresponding to the $i$th largest eigenvalues of $\bar{G}$. 
Let $\theta_i$ be the counterpart for $CMC'$. 
If $\|\bar{G}-CMC'\|<\epsilon$, then $\|u_i u_i^T - \theta_i \theta_i^T\|<\delta \epsilon$. 
This implies that $1-(u_i^T\theta_i)^2<\delta\theta$. 
That is, $u_i$ is close to $\theta_i$ or $-\theta_i$. 
But $CMC'$ has only $K$ distinct rows. 
So the results show a spectral clustering on $\bar{G}$ will eventually lead to perfect recovery of the class labels.
\end{proof}

\section{Proof of Theorem \ref{theorem:consistency}}
\label{sec:ProofML}
We begin with the proofs of Lemmas \ref{lemma:KLdiffernece}--\ref{lemma:profile-mle}.

\begin{proof}[Proof of Lemma \ref{lemma:KLdiffernece}]
This is from Lemmas A1 and A2 by \citet{choi2012stochastic}. The main arguments are as follows. 
$h$, the expectation of the log-likelihood, is always maximized at the true parameters.
For any partition of $P$, any refinement of the partition increases $h$.
For any label assignment $z$, we can find a refinement that has at least $r(z)/2$ pairs of nodes that connect to at least $\min_k n_k(c)$ of nodes that differ at least $\delta$ from the truth.
\end{proof}

\begin{proof}[Proof of Lemma \ref{lemma:sigma}]
Because of symmetry, we only consider $p\in (0,\frac{1}{2}]$.
Let $C_0=p/2$. Then $C_0<p<1-C_0$. Let region $C=[C_0,1-C_0]$. 
By the Chernoff bound, $P(|x-p|>\epsilon)\leq 2\exp(-2N\epsilon^2)$. 
Therefore, $P(x\not\in C)\leq 2\exp(-Np^2/2)$.
Let $E_C(x)=\sum_{x\in C} xp(x)$. 
The subscript $C$ denotes any operation restricted on region $C$. 
Define the following functions and constants:
\begin{align*}
&\sigma(p)=p\log(p)+(1-p)\log(1-p); & M_0&=\max_{p\in C} |\sigma(p)| = -\sigma(0.5)\leq 0.7\\
&\sigma'(p)=\log(p)-\log(1-p); &M_1&=\max_{p\in C} |\sigma'(p)|=\log(1-C_0)-\log(C_0)\\
&\sigma''(p)=\frac{1}{p}+\frac{1}{1-p}; &M_2&=\max_{p\in C} |\sigma''(p)|=\frac{1}{C_0}+\frac{1}{1-C_0}\\
&\sigma'''(p)=-\frac{1}{p^2}+\frac{1}{(1-p)^2}; &M_3&=\max_{p\in C} |\sigma'''(p)|=\frac{1}{C_0^2}-\frac{1}{(1-C_0)^2}\\
&\sigma^{(4)}(p)=\frac{1}{2p^3}+\frac{1}{2(1-p)^3}; & M_4&=\max_{p\in C} |\sigma^{(4)}(p)|=\frac{1}{2C_0^3}+\frac{1}{2(1-C_0)^3}
\end{align*}
We can get the following bounds:
\begin{gather*}
|E_{\bar{C}}\sigma(x)|\leq M_0P(x\not\in C)\leq 2\exp\left(-\frac{Np^2}{2}\right)\\
E(x-p)^3 =\frac{p(1-p)(1-2p)}{N^2}\leq \frac{1}{4N^2}\\
E(x-p)^4=\frac{p(1-p)^3+p^3(1-p)}{N^3}+\frac{3(N-1)p^2(1-p)^2}{N^3}\leq \frac{1}{2N^3}+\frac{1}{4N^2}
\end{gather*}
By Taylor expansion on region $C$,
\[
\sigma(x)=\sigma(p)+\sigma'(p)(x-p)+\frac{\sigma''(p)}{2}(x-p)^2+\frac{\sigma'''(p)}{6}(x-p)^3+R(x)
\]
\[
|R(x)| \leq \max_{x\in C} |\sigma^{(4)}(x)(x-p)^4/24|.
\]
Thus
\begin{align*}
&N[E(\sigma(x))-\sigma(p)]-\frac{1}{2} \\
=\;&NE\left[\sigma(x)-\sigma(p)-\sigma'(p)(x-p)-\frac{\sigma''(p)}{2}(x-p)^2\right]\\
\leq \;&N E_C \left[\sigma(x)-\sigma(p)-\sigma'(p)(x-p)-\frac{\sigma''(p)}{2}(x-p)^2\right] + N(2M_0+2M_1)P(x\not\in C)\\
\leq \;&N E_C \left[\frac{\sigma'''(p)}{6}(x-p)^3\right] + \max_{x\in C} |\sigma^{(4)}(x)(x-p)^4/24|+ N(2M_0+2M_1)P(x\not\in C)\\
\leq \;&\frac{M_3}{24N} + \frac{M_4}{24}\left(\frac{1}{2N^2}+\frac{1}{4N}\right)+ 2N\left(1+M_1+\frac{M_3}{6}\right)\exp\left(-\frac{Np^2}{2}\right)\\
\rightarrow \;&0 \text{ as } N\rightarrow \infty,
\end{align*}
which completes the proof.
\end{proof}

\begin{proof}[Proof of Lemma \ref{lemma:profile-mle}]
For any $z$, as $\bar{P}$ averages $P$, $C_0\leq \bar{P}_{kl}(z)\leq 1-C_0$. We have 
\[
g(z)-h(z) = \sum_{k\leq l} n_{kl}(z)[E(x_{kl}(z))-\sigma(\bar{P}_{kl})]
\]
where $$x_{kl}(z)=\frac{o_{kl}(z)}{n_{kl}(z)}\sim \frac{1}{n_{kl}} \text{Bin}(n_{kl},\bar{P}_{kl}).$$
By Lemma \ref{lemma:sigma},
\[
n_{kl}(z)[E(x_{kl}(z))-\sigma(\bar{P}_{kl})] \rightarrow \frac{1}{2} + O\left(\frac{1}{n_{kl}(z)}\right).
\]
Therefore
\[
g(z)-h(z) = \frac{K(K+1)}{4} + O\left(\frac{K^2}{m(z)}\right).
\]
\end{proof}

\begin{proof}[Proof of Theorem \ref{theorem:consistency}]
We want to show that there exists $\delta_0$ such that 
\begin{equation}
Ef(c)-Ef(z)\geq \delta_0  \label{eqn:diff}
\end{equation}
for all $z\neq c$.
Then by Bernstein's inequality, we have 
\[
\frac{1}{T}\left|\sum_t [f^t(z)-Ef^t(z)]\right|\rightarrow 0  \text{ as } T\rightarrow \infty.
\]
Therefore
\[
\frac{1}{T}\sum_t f^t(c) - \frac{1}{T}\sum_t f^t(z) \rightarrow \frac{1}{T}\sum_t(Ef(c)-Ef(z))\geq \delta_0
\]
for all $z\neq c$.
Then we get the conclusion that $c$ is the unique maximizer of $\sum_t f^t(z)$.
To show \eqref{eqn:diff}, we know
\begin{align*}
Ef(c) - Ef(z) &=g(c)-g(z)\\
&=(h(c) - h(z)) + (g(c) - h(c)) - (g(z) - h(z))\\
&\geq \delta m(c) r(z) + (g(c) - h(c)) - (g(z) - h(z)).
\end{align*}
by Lemma \ref{lemma:KLdiffernece}.

Let $n_0$ denote the threshold that, for all $N \geq n_0$,  $|g(z)-h(z)-K(K+1)/4| \leq  \delta_1$ if $m(z)\geq n_0$.
Then for $m(c)\geq n_0$,
\[
g(c) - h(c)\geq \frac{K(K+1)}{4} - \delta_1 .
\]
For any $z$, the total number of nodes are $N\geq K m(c)$. The total number of nodes that do not satisfy $n_k(z)\geq n_0$ is at most $n_0(K-1)$. And for the rest of the nodes, we still have the bounded feature as in Lemma \ref{lemma:sigma}. Hence
\[
g(z)-h(z) \leq \delta_2 n_0 (K-1) + \frac{K(K+1)}{4} + \delta_1.
\]
Therefore
\[
Ef(c) - Ef(z) \geq \delta m(c) r(z) - 2\delta_1 -  \delta_2 n_0 (K-1).
\]
This is an increasing function of $m(c)$. So we can find large enough $m(c)$ that $\delta m(c) r(z) - 2\delta_1 -  \delta_2 n_0 (K-1) \geq \delta_0$ as required.
\end{proof}

\section{Minimum number of nodes for consistency with 2 classes}
\label{sec:MinNodes}
Theorem \ref{theorem:consistency} guarantees consistency of the MLE provided the conditions on $C_0$ and $\delta$ are satisfied, and the minimum number of nodes in any class is large enough. 
For the special case of $K=2$ classes, we can calculate the minimum number of nodes $N$ required to guarantee consistency. 
We need $N$ sufficiently large so that the expectation of the log-likelihood under the true class labels $c$ at each layer is larger than the expectation of the log-likelihood under any other class assignment $z$. 
With $2$ classes, for any given value of $N$, we can simply enumerate over the number of misclassified nodes to determine if this is indeed true. 
It suffices only to check two boundary cases for the other class assignment $z$:
\begin{itemize}
\item $2$ nodes are in one class, and the remaining $N-2$ nodes are in the other class.
\item Only a single node is misclassfied in $z$, i.e.~$c_i = z_i$ for all except a single value of $i \in \{1,\ldots,N\}$.
\end{itemize}
If the expectation of the log-likelihood under $c$ is indeed larger than the expectation of the log-likelihood under any other class assignment $z$ for both of these cases, then consistency as $T \rightarrow \infty$ is guaranteed for this value of $N$. 
If it is not, then one can simply iterate over values of $N$ until it is large enough to guarantee sufficiency. 

\section{Details of variational approximation}
\label{sec:Variational}
Let vector $\vec{z}_i=(z_{i1},\ldots,z_{iK})$ denote the class assignment vector for each node $i$.
\[
z_{ik}=\begin{cases} 
	1 & \text{if $i$ in class $k$}\\
	0 & \text{otherwise}. 
	\end{cases} 
\]
So $\vec{z}_i$ has all zeros except a single one indicating its class. 
This notation is easier for writing down the likelihood.
We denote the initial class assignment probability by $\vec{\pi}=(\pi_1,\ldots,\pi_K)$.
This is the multinomial parameter of $\vec{z}_i$.
The likelihood is
\[
l=\prod_{i,k} \pi_k^{z_{ik}} \prod_{i<j,k\leq l,t} [(p_{kl}^t)^{g_{ij}^t}(1-p_{kl}^t)^{1-g_{ij}^t}]^{z_{ik}z_{jl}}.
\]
It is difficult to maximize because $\vec{z}$ cannot be integrated out.
Instead we use variational approximation to decompose the likelihood into independent marginal distributions and apply an expectation-maximization technique to search for the maximum \citep{daudin2008mixture}.
Let $\vec{z}_i$ follow independent multinomial distributions, i.e.
\[
\vec{z}_i \stackrel{\text{ind}}{\sim} \text{Multi} (b_{i1},\ldots,b_{iK}), \quad E[z_{ik}]=b_{ik}.
\]
In the variational E-step, the approximate marginal distribution $q(z_i)$ is 
\begin{equation} \label{eqn:e-step}
\ln q(z_i) = \sum_k z_{ik}\left( \ln \pi_k  + \sum_{j\neq i,l,t}E[z_{jl}]\left(g_{ij}^t\ln P^t_{kl}+(1-g_{ij}^t)\ln (1-P^t_{kl})\right)\right) + \text{Const}.
\end{equation}
We update $b_{ik}$ according to \eqref{eqn:e-step}.
That is,
\begin{equation*}
b_{ik}\propto \pi_k \prod_{j\neq i}\prod_t \prod_l \left[(P_{kl}^t)^{g_{ij}^t}(1-P_{kl}^t)^{1-g_{ij}^t}\right]^{b_{jl}}
\end{equation*}
In the M-step, we maximize $\vec{\pi}$ and $P^t$ by
\begin{gather*}
\pi_k\propto\sum_i E[z_{ik}]=\sum_i b_{ik}\\
P^t_{kl}=\frac{\sum_{i\neq j}E[z_{ik}]E[z_{jl}]g_{ij}^t}{\sum_{i\neq j} E[z_{ik}]E[z_{jl}]}=\frac{\sum_{i\neq j}b_{ik}b_{jl}g_{ij}^t}{\sum_{i\neq j}b_{ik}b_{jl}}.
\end{gather*}
We iterate between the two steps until convergence.

\section*{Acknowledgments} 
We would like to thank Matteo Magnani for providing us with the AU-CS multi-layer network data. 
We also thank Brian Eriksson as well as the anonymous reviewers for their suggestions to improve the manuscript.

\bibliographystyle{abbrvnat}
\bibliography{reference}

\end{document}